\def\0{{\mathbf 0}}
\def\1{{\mathbf 1}}
\def\F{{\mathbb F}}
\def\Z{{\mathbb Z}}
\def\Zp8{{\Z_{p^\infty}}}
\newtheorem{thm}{Theorem}[section]
\newtheorem{prop}[thm]{Proposition}
\newtheorem{lem}[thm]{Lemma}
\newtheorem{cor}[thm]{Corollary}
\theoremstyle{definition}
\newtheorem{ex}[thm]{Example}
\date{10/18/2012}
\begin{document}

\newcommand{\comment}[1]{} 

\title{Optimal Subcodes of Self-Dual Codes and Their Optimum Distance Profiles}

\author{
Finley Freibert\\
 Department of Mathematics \\
 Ohio Dominican University \\
 Columbus, OH 43219, USA\\
{Email: \tt ffreibert@yahoo.com}\\
\and
Jon-Lark Kim\thanks{corresponding author} \\
 Department of Mathematics \\
 Sogang University  \\
 Seoul 121-742, South Korea \\
{Email: \tt  jlkim@sogang.ac.kr}
}

\maketitle

\begin{abstract}

Binary optimal codes often contain optimal or near-optimal subcodes.
In this paper we show that this is true for the family of self-dual codes. One approach
is to compute the optimum distance profiles (ODPs) of linear codes,
which was introduced by Luo,~et.~al.~(2010). One of our main results is the development of general algorithms, called the Chain Algorithms, for
finding ODPs of linear codes. Then we determine the ODPs for the Type
II codes of lengths up to $24$ and the extremal Type II codes of
length $32$, give a partial result of the ODP of the extended
quadratic residue code $q_{48}$ of length 48. We also show that there
does not exist a $[48,k,16]$ subcode of $q_{48}$ for $k \ge 17$, and
we find a first example of a {\em doubly-even} self-complementary
$[48, 16, 16]$ code.

\end{abstract}

{\bf{Key Words:}} algorithm, self-dual codes, subcodes, optimum
distance profiles, optimal codes

{\bf AMS subject classification}: 94B05, 11T71


\newpage

\section{Introduction}


One of the main problems that has arisen in Coding Theory is the search for optimal codes
with the largest size given a minimum distance or optimal codes with the largest minimum
distance given a size~\cite{HufPle, PleHuf, MacSlo}. There has been extensive work in
this direction~\cite{codetables}. Some well-known families of codes, such as the
Reed-Muller codes or the cyclic codes, contain notable subcodes. However,
comparatively little attention has been paid to the subcodes of an
optimal linear code in general. It is a natural concern to determine which
linear codes contain optimal (or near-optimal) subcodes.
Among linear codes, we suggest self-dual or self-orthogonal codes
since their possible non-zero weights
jump by 2 or 4. Thus there is a possibility to get subcodes
with a large minimum distance.

In fact, self-dual codes have been one of the most active topics in
algebraic coding theory since V. Pless~\cite{Ple_72} started to
classify binary self-dual codes in $1972$. These codes have interesting
connections to groups, $t$-designs, lattices, and theta series~\cite{HufPle, MacSlo, RS}.
Furthermore, many extremal self-dual codes often turn out to be
the best among the linear codes with the same parameters. Nevertheless,
little attention has been paid to the subcodes of self-dual codes.

We plan to construct optimal (self-orthogonal) subcodes of a
given linear (self-dual) code.  In order to construct finite-state
codes, Pollara, Cheung and McEliece~\cite{PolCheMcE} constructed the
first $[24, 5, 12]$ subcode of the binary Golay $[24,12, 8]$ code,
improving a previously known $[24, 5, 8]$ subcode. Maks and
Simonis~\cite{MakSim} have shown that there are exactly two
inequivalent $[32, 11, 12]$ codes in the binary Reed-Muller code
$R(2,5)$ which contain $R(1,5)$ and have the weight set $\{0, 12,
16, 20, 32 \}$.

We show that in many cases optimal subcodes can be obtained by
computing optimum distance profiles (ODPs), a concept
introduced by Luo, Han Vinck, and Chen~\cite{LuoVinChe_2010}. The authors~\cite{LuoVinChe_2010}
considered how to construct and then exclude (or include, respectively) the basis
codewords one by one while keeping a distance profile as large as
possible in a dictionary order (or in an inverse dictionary order, respectively).
Thus fault-tolerant capability is improved by selecting subcodes in
communications and storage systems. The practical applications are
found in WCDMA~\cite{HolTos},~\cite{TanWoo} and address retrieval on
optical media~\cite{DijBagTol}.

In~\cite{CheVin_2010} and~\cite{LuoVinChe_2010}, the authors give results on the ODPs of the binary Hamming $[7,4,3]$ code, the binary and ternary Golay codes, Reed-Solomon codes, the first-order and second order Reed-Muller codes.
Recently, Yan,~et.~al.~\cite{YanZhuLuo} considered the optimum distance profiles of some quasi-cyclic codes and proposed two algorithms, called the ``subcodes traversing algorithm'' and ``supercodes traversing algorithm.'' These algorithms enumerate all subcodes of a given code. Hence they are rather inefficient in finding ODPs of linear codes with a relatively large dimension. Their examples have dimension $10$ only. Hence we ask the following two questions.

\medskip

(i) Is there an interesting class of linear codes whose ODPs are not known yet?

(ii) Is there an efficient algorithm to compute ODPs of linear codes?

\medskip

For question (i), we choose a class of self-dual codes since the structure of these subcodes is surprisingly less known. For question (ii), we propose two full algorithms based on cosets, called the Chain Algorithms and two random algorithms to find ODPs of the codes. These algorithms look at a chain of subcodes of a given code and consider the equivalence of the codes with the same dimension. Hence they are more efficient than the subcodes and supercodes traversing algorithm~\cite{YanZhuLuo}.

\medskip

In this paper, we give the ODPs of Type II self-dual codes of
lengths up to $24$ and the five extremal Type II codes of
length $32$, give a partial result of the ODP of the extended
quadratic residue code $q_{48}$ of length 48. Moreover, we show that each of
the five Type II $[32, 16, 8]$ codes contains the two optimal $[32,
11, 12]$ codes, which was previously known only for the Reed-Muller code $R(2,
5)$. We also construct a $[48, 14, 16]$ code and an optimal $[48, 9,
20]$ code from the extended quadratic residue code $q_{48}$ of
length 48. Both codes are not equivalent to the best known codes of
the same parameters in the Magma database~\cite{Mag}.  We also show
that there does not exist a $[48,k,16]$ subcode $C$ of $q_{48}$ for
$k \ge 17$. We find a first example of a {\bf doubly-even}
self-complementary $[48, 16, 16]$ code. Such a code was previously
not known to exist. Only one singly-even self-complementary
$[48,16,16]$ code was found by A. Kohnert~\cite{Koh}. Similarly we
construct $[72, 29, 16]$, $[72, 23, 20]$ codes which are not
equivalent to the best known codes. Further we construct a new
self-orthogonal $[72, 35, 16]$ code with $A_{16} = 129972$. All our computations were done using Magma~\cite{Mag}.

\section{Preliminaries}

We refer to~\cite{HufPle} for basic definitions and results related
to self-dual codes. All codes in this paper are binary. A {\em
linear $[n,k,d]$ code $C$ of length $n$} is a $k$-dimensional subspace
of  $ GF(2)^n$ with the minimum (Hamming) weight $d$. Two codes over $GF(2)$ are said to be {\em equivalent}
 if they differ only by a permutation
of the coordinates.
 The {\em dual} of $C$,
denoted by $C^{\perp}$ is the set of vectors orthogonal to every
codeword of $C$ under the Euclidean inner product. If $C=C^{\perp}$,
$C$ is called {\em self-dual}. A self-dual code is called {\em Type
II ({\mbox{or}} doubly-even)} if every codeword has weight divisible
by $4$, and {\em Type I ({\mbox{or}} singly-even)} if there exists a
codeword whose weight is congruent to $2 \pmod{4}$.

 Let $C$ be a binary self-dual code of length $n$
and minimum distance $d(C)$. Then $d(C)$ satisfies the
following~\cite{RS}.

$$d(C)  \le \left\{
\begin{array}{ll}
4  \left[\frac{n}{24}\right] +4 , & \mbox{ if } n \ne 22 \pmod{24}, \\
4  \left[\frac{n}{24}\right] +6 ,& \mbox{ if } n  = 22 \pmod{24}.
\end{array}\right.$$
A self-dual code meeting one of the above bounds is called {\em
extremal}.

A subcode $C'$ of a linear code $C$ with minimum
distance $d'=d(C') > d(C)$ is {\em maximal} if
there is no subcode $C''$ such that $C' \subsetneq C'' \subsetneq C$
and $d(C'')=d'$.
Given $d'> d(C)$, the maximum of the dimensions of maximal
subcodes $C'$ with $d(C')=d'$ is called the {\em
maximum dimension with respect to $d'$}. Given $n$ and $k$, a linear
$[n,k,d]$ code is {\em minimum distance optimal} if $d$ is the
largest possible. (Grassl's online table~\cite{codetables} is a good
source for reasonable lengths and dimensions for finite fields of
order up to $9$.) Given $n$ and $d$, a linear $[n,k,d]$ code is {\em dimension optimal} if $k$ is the largest possible~\cite[p. 53]{HufPle}.
We raise the following natural question.
Given a binary self-dual code $C$ and any non-zero weight $d' >
d(C)$, how do we find a subcode with maximum dimension with
respect to $d'$? In general, this question seems very difficult since theoretically we should know all subcodes. On
the other hand, there has been another approach related to this
problem, as described below.

Let $C$ be a binary $[n,k]$ code and let $C_0=C$. A sequence of
linear subcodes of $C$, $C_0 \supset C_1 \supset \cdots \supset
C_{k-1}$ is called a {\em subcode chain}, where the dimension of
$C_i$ is $k-i$ for $i=0, \dots k-1$. (If we let $C_k=\{{\bf 0} \}$ and
$V_i=C_{k-i} (i=0, \cdots, k)$, then $\{ {\bf 0} \}=V_0 \subset V_1 \subset
\cdots \subset V_k=C$ is known as a {\em complete flag}~\cite{MilStu}.)

Let $d_i=d(C_i)$ be the minimum distance of $C_i$. Then the
sequence $d_0 \le d_1 \le \cdots \le d_{k-1}$ is called a {\em
distance profile} of $C$ (see
\cite{CheVin_2010},~\cite{LuoVinChe_2010} for details). A generator
matrix such that its first $k-i$ rows (i.e., the remaining rows
after removing its $i$ rows from the bottom) form a generator matrix
of $C_i$ for $0 \le i \le k-1$, is called a {\em generator matrix
with respect to the distance profile}.

For any two integer sequences of length $k$, $a=a_0,\dots, a_{k-1}$
and $b=b_0, \dots, b_{k-1}$, $a$ is called an {\em upper bound on
$b$ in the dictionary order} if $a$ is equal to $b$ or there is an
integer $t$ such that
\[a_i=b_i {\mbox{  for }} 0 \le i \le t-1, {\mbox{ and }} a_t > b_t.
\]

On the other hand,  $a$ is called an {\em upper bound on $b$ in the inverse dictionary order} if $a$ is equal to $b$ or there is an integer $t$ such that
\[a_i=b_i {\mbox{  for }} t+1 \le i \le k-1, {\mbox{ and }} a_t > b_t.
\]

A distance profile of the linear block code is called the {\em
optimum distance profile} (or {\em ODP} for short) {\em in the
dictionary order}, which is denoted by ODP$^{dic}[C](0)$,
ODP$^{dic}[C](1), \dots,$ ODP$^{dic}[C](k-1)$ if it is an upper
bound on any distance profile of $C$ in the dictionary order.
Similarly, a distance profile of the linear block code is called the
{\em optimum distance profile} (or {\em ODP} for short) {\em in the
inverse dictionary order}, which is denoted by ODP$^{inv}[C](0)$,
ODP$^{inv}[C](1), \dots,$ ODP$^{inv}[C](k-1)$ if it is an upper
bound on any distance profile of $C$ in the inverse dictionary
order. To simplify notations, for a given $[n,k]$ code $C$ we may
use {ODP}$^{dic}[C]_i$= {ODP}$^{dic}[C](k-i)$ (resp.
{ODP}$^{inv}[C]_i$= {ODP}$^{inv}[C](k-i)$) so that we may easily
interpret the corresponding subcode parameters: $[n, i,$
{ODP}$^{dic}[C]_i]$ (resp. $[n, i,$ {ODP}$^{inv}[C]_i]$).
 We also use ODP$[C]$ to denote the optimum minimum distance
profile in both orders. Note that for a given $[n,k]$ code $C$ over
$GF(q)$, the number of its subcode chains~\cite{LuoVinChe_2010} is
\[
\prod_{t=2}^k Q[t, t-1] =\prod_{t=2}^k \frac {q^t -1}{q-1}, \]

\noindent where $Q[t,r]$ is the $q$-ary Gaussian binomial coefficient
$\prod_{j=0}^{r-1} \frac{q^{t-j} -1}{q^{r-j} -1}.$ Hence for large
dimensions it will be very difficult to determine ODP of a linear code by
a brute-force search.

\section{Relation between ODP and the maximum dimension}

The ODP of a code and the maximum dimension with
respect to a minimum distance are related concepts.  Note that the first
minimum distance $d'$ to appear in the ODP in dictionary order corresponds to
a maximal subcode with maximum dimension corresponding to $d'$. However, after this
term, maximal subcodes in the subcode chain do not necessarily
imply the maximum dimension.  This is an observation which follows from
the definition of a maximal subcode and the definition of ODP; we formalize the
theory in the following results.  However, note that given a dimension $k' \le k$
there may be multiple minimum distances $d'$ with respect to which $k'$ is the
maximum dimension.  Therefore for the first proposition we define $d_{k'}$ to be the
maximum of such minimum distances.

\begin{prop}\label{maxODP}
Let $C$ be an $[n,k]$ code. Let $k' \le k$ be given. Define $d_{k'}=\max (\{d' : k'$ is the maximum dimension in $C$ with respect to $d'\})$ and define $d_{opt}$ to be the optimal minimum distance attained among all $[n,k']$ codes (many values available at~\cite{codetables}), then

\[d_{opt} \geq d_{k'} \geq  \max ( \{ {\mbox{ODP}}^{{\mbox{dic}}}[C]_{k'}, {\mbox{ODP}}^{\mbox{inv}}[C]_{k'} \} ).\]

\end{prop}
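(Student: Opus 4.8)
The plan is to reduce both displayed inequalities to a single identity. For a dimension $k'\le k$, write $D_{k'}=\max\{\,d(C'): C'\subseteq C,\ \dim C'=k'\,\}$ for the largest minimum distance of a $k'$-dimensional subcode of $C$, and let $M(d')$ denote the maximum dimension with respect to $d'$. The first observation I would record is a reformulation of $M(d')$: for any $d'>d(C)$ one has $M(d')=\max\{\dim C':d(C')=d'\}$. Indeed, a subcode of minimum distance exactly $d'$ of largest possible dimension cannot be enlarged within $C$ while preserving the value $d'$, hence is maximal, while conversely every maximal subcode of minimum distance $d'$ has minimum distance exactly $d'$ by definition. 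I then claim that, whenever the set defining $d_{k'}$ is nonempty, $d_{k'}=D_{k'}$; the two inequalities follow almost at once from this.

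The core step is therefore the identity $d_{k'}=D_{k'}$. If $M(d')=k'$ for some $d'$, then by the reformulation there is a $k'$-dimensional subcode of minimum distance $d'$, so $d'\le D_{k'}$, and hence $d_{k'}\le D_{k'}$. For the reverse inequality, choose a $k'$-dimensional subcode $C^{*}$ with $d(C^{*})=D_{k'}$. No $k'$-dimensional subcode has minimum distance exceeding $D_{k'}$, and because $d_{k'}$ is defined, no $(k'+1)$-dimensional subcode attains minimum distance $D_{k'}$ either; thus every proper supercode of $C^{*}$ in $C$ has minimum distance strictly below $D_{k'}$, so $C^{*}$ is maximal and $M(D_{k'})=k'$. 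Consequently $D_{k'}$ belongs to the set defining $d_{k'}$, giving $d_{k'}\ge D_{k'}$ and hence $d_{k'}=D_{k'}$.

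With the identity established, both inequalities are immediate. For the left inequality, the subcode $C^{*}$ above is one particular $[n,k',D_{k'}]$ code; since $d_{opt}$ is the best minimum distance attainable by any $[n,k']$ code, $d_{opt}\ge D_{k'}=d_{k'}$. For the right inequality, the optimum distance profile in each order furnishes a subcode chain of $C$, and in particular a $k'$-dimensional subcode whose minimum distance equals ${\mbox{ODP}}^{{\mbox{dic}}}[C]_{k'}$ (respectively ${\mbox{ODP}}^{\mbox{inv}}[C]_{k'}$). By the very definition of $D_{k'}$, each of these two values is at most $D_{k'}=d_{k'}$, and taking the larger of the two yields $d_{k'}\ge\max(\{{\mbox{ODP}}^{{\mbox{dic}}}[C]_{k'},{\mbox{ODP}}^{\mbox{inv}}[C]_{k'}\})$.

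I expect the identity $d_{k'}=D_{k'}$ to be the only genuine obstacle. The difficulty is exactly the phenomenon noted just before the proposition: a subcode occurring in an optimum distance profile (beyond the first jump) need not be maximal, nor need it have maximum dimension for its own minimum distance, so one cannot directly identify an ODP subcode with the maximal subcode used to define $d_{k'}$. The reformulation of $M(d')$ as $\max\{\dim C':d(C')=d'\}$ is precisely what bridges this gap. The one edge case I would state explicitly is that $d_{k'}$ is defined — equivalently, the set $\{d':M(d')=k'\}$ is nonempty — exactly when no $(k'+1)$-dimensional subcode matches the best minimum distance of a $k'$-dimensional subcode, i.e. when $D_{k'+1}<D_{k'}$.
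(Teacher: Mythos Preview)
Your argument is correct and follows the same line as the paper's: both proofs rest on the observation that whenever $C$ contains an $[n,k',d']$ subcode one has $d_{k'}\ge d'$, and then apply this to the $k'$-dimensional subcodes furnished by the two ODP chains. You go further than the paper by making this observation explicit as the identity $d_{k'}=D_{k'}$ (with $D_{k'}$ the best minimum distance among $k'$-dimensional subcodes) and by isolating the edge case that $d_{k'}$ is defined precisely when $D_{k'+1}<D_{k'}$; the paper simply asserts the key inequality ``by the definition of $d_{k'}$'' without unpacking it.
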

\begin{proof}
The claim $d_{opt} \geq d_{k'}$ is clear since $d_{opt}$ is the maximum minimum distance possible among all $[n,k']$ codes.
By the definition of $d_{k'}$, if $C$ contains an $[n,k',d']$ subcode, then $d_{k'} \geq d'$.
Since ODP$^{dic}[C]_{k_i}$ (respectively ODP$^{inv}[C]_{k_i})$ corresponds to a dimension $k_i$ subcode in the subcode chain having minimum distance ODP$^{dic}[C]_{k_i}$ (respectively ODP$^{inv}[C]_{k_i}$), the preceding claim proves the proposition.
\end{proof}

\begin{cor}
Let $C$ be an $[n,k]$ code. Let $k' \le k$ be given. Define $d_{k'}$ and $d_{opt}$ as above.  If ODP$^{dic}[C]_{k'} = d_{opt}$ or ODP$^{inv}[C]_{k'} = d_{opt}$, then
$$d_{opt} = d_{k'}=\max ( \{ {\mbox{ODP}}^{{\mbox{dic}}}[C]_{k'}, {\mbox{ODP}}^{\mbox{inv}}[C]_{k'} \} ).$$

\end{cor}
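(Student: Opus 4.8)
The plan is to deduce the Corollary directly from Proposition~\ref{maxODP} by a short squeeze argument, treating the two hypotheses symmetrically. The only fact needed beyond the Proposition is the trivial observation that the maximum of a two-element set is at least each of its members; combined with the inequality chain already established, this pins all three quantities to a common value.

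First I would recall the inequality chain proved in Proposition~\ref{maxODP}, namely
\[
d_{opt} \geq d_{k'} \geq \max(\{ {\mbox{ODP}}^{{\mbox{dic}}}[C]_{k'}, {\mbox{ODP}}^{\mbox{inv}}[C]_{k'} \}),
\]
which already gives $d_{opt} \geq \max(\{ {\mbox{ODP}}^{{\mbox{dic}}}[C]_{k'}, {\mbox{ODP}}^{\mbox{inv}}[C]_{k'} \})$ unconditionally. Next I would invoke the hypothesis: in either case (${\mbox{ODP}}^{{\mbox{dic}}}[C]_{k'} = d_{opt}$ or ${\mbox{ODP}}^{\mbox{inv}}[C]_{k'} = d_{opt}$) the corresponding term is one of the arguments of the maximum, so that
\[
\max(\{ {\mbox{ODP}}^{{\mbox{dic}}}[C]_{k'}, {\mbox{ODP}}^{\mbox{inv}}[C]_{k'} \}) \geq d_{opt}.
\]
Concatenating the two displays yields the closed loop $d_{opt} \geq d_{k'} \geq \max(\cdots) \geq d_{opt}$, which forces every inequality to collapse to an equality. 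Reading off the resulting equalities then gives the asserted $d_{opt} = d_{k'} = \max(\{ {\mbox{ODP}}^{{\mbox{dic}}}[C]_{k'}, {\mbox{ODP}}^{\mbox{inv}}[C]_{k'} \})$.

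I do not anticipate any genuine obstacle here, since the entire content is already packaged in Proposition~\ref{maxODP}; the Corollary simply records the boundary case in which the ODP lower bound meets the absolute optimum $d_{opt}$, so that the outer bound and the inner bound coincide and the squeeze is exact. The single point meriting a word of care is the disjunctive ``or'' in the hypothesis, but this requires no case analysis beyond the remark that $\max$ dominates each argument, so whichever ODP term equals $d_{opt}$ suffices to establish the reverse inequality above.
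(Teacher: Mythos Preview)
Your proposal is correct and matches the paper's treatment: the paper states this Corollary immediately after Proposition~\ref{maxODP} without a separate proof, implicitly regarding it as the obvious squeeze you describe. Your write-up simply makes that immediate deduction explicit.
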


The necessity of defining $d_{k'}$, in Proposition~\ref{maxODP}, as a maximum is due to the fact that there may be multiple minimum distances yielding the same maximum dimension.  An example where this occurs is the following:

\bigskip

\begin{ex}
Let $C$ be the [6,3,1] code with the following generator matrix:
 \[G = \left[ \begin{array}{lll}
1 1 & 11 &00 \\
11 & 00  &11 \\
10 & 00  &00
\end{array}
\right]
\]
The maximum dimension with respect to $d_1 = 4$ is 2, due to the fact that the first two rows of $G$ generate a [6,2,4] subcode of $C$ with the following generator matrix:
 \[G_1= \left[ \begin{array}{lll}
11 & 11 &00 \\
11 & 00  &11
\end{array}\right]. \]
 Similarly, the maximum dimension with respect to $d_2 = 3$ is 2; this is obtained by adding the third row of $G$ to each row in $G_1$ which yields a $[6,2,3]$ subcode of $C$ with the following generator matrix:
 \[G_2= \left[ \begin{array}{lll}
01 & 11 &00 \\
01 & 00  &11
\end{array}\right] \]
\end{ex}

\bigskip

Notice that in Proposition~\ref{maxODP} we fix the dimension $k'$; a dual statement where we instead fix the minimum distance is the following.

\begin{prop}\label{fixdj}
Let $C$ be an $[n,k]$ code and let $0 \le j \le k-1$. Suppose $d_{j}$ is a minimum distance appearing as ODP$^{dic}[C]_j$ or ODP$^{inv}[C]_j$. Define $k_{j}$ to be the maximum dimension with respect to $d_{j}$, then $k_{j} \geq j$.
\end{prop}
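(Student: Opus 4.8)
The plan is to convert the hypothesis on the optimum distance profile into the existence of a concrete subcode of dimension exactly $j$ whose minimum distance is $d_j$, and then to enlarge this subcode to a maximal one (in the sense defined before Proposition~\ref{maxODP}) of dimension at least $j$ without decreasing its minimum distance. Since $k_j$ is, by definition, the largest dimension occurring among maximal subcodes of $C$ with minimum distance $d_j$, exhibiting a single such maximal subcode of dimension $\geq j$ forces $k_j \geq j$. This argument runs parallel to, and is dual to, the proof of Proposition~\ref{maxODP}: there one reads off a subcode of the prescribed dimension, whereas here one reads off a subcode of the prescribed minimum distance. Throughout I assume $d_j > d(C)$, since this is needed for the maximum dimension with respect to $d_j$ to be defined at all.

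First I would unwind the notation. By hypothesis $d_j$ equals ${\rm ODP}^{\rm dic}[C]_j$ or ${\rm ODP}^{\rm inv}[C]_j$, and by the convention ${\rm ODP}[C]_j = {\rm ODP}[C](k-j)$ this value is realized as the minimum distance of the dimension-$j$ member $C'$ of an optimal subcode chain; thus $\dim C' = j$ and $d(C') = d_j$. Next I would run the enlargement. If $C'$ is already maximal, then it witnesses $k_j \geq \dim C' = j$. Otherwise, by the definition of a maximal subcode there exists $C''$ with $C' \subsetneq C'' \subsetneq C$ and $d(C'') = d_j$, so $\dim C'' > j$. Iterating this replacement strictly increases the dimension at each step; since $d_j > d(C)$ means $C$ itself has minimum distance strictly below $d_j$, no enlargement can ever reach $C$, so the dimensions stay at most $k-1$ and the process terminates at a maximal subcode $\widetilde{C}$ with $d(\widetilde{C}) = d_j$ and $\dim \widetilde{C} \geq j$. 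In both cases $k_j \geq \dim \widetilde{C} \geq j$, as required.

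The computation itself is routine, so the step deserving the most care, and the one I regard as the main obstacle, is the distinction between a \emph{maximal} subcode and a subcode of \emph{maximum dimension}. The dimension-$j$ subcode handed to us by the ODP need not be maximal, and consequently does not on its own certify $k_j \geq j$; the content of the argument is that it embeds in a maximal subcode of dimension no smaller than $j$. Making the enlargement rigorous amounts to checking that each step preserves the minimum distance $d_j$, strictly increases the dimension, and (using $d_j > d(C)$) keeps the result a proper subcode of $C$, so that the chain of enlargements must halt at a genuinely maximal subcode.
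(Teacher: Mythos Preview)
Your proof is correct and follows essentially the same approach as the paper's proof. The paper's argument is extremely terse---it simply asserts that $k_j$ is an upper bound on the dimension of any $[n,j,d_j]$ subcode of $C$---whereas you spell out explicitly why this is so by running the enlargement to a maximal subcode; you also make explicit the implicit hypothesis $d_j > d(C)$ needed for ``maximum dimension with respect to $d_j$'' to be defined.
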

\begin{proof}
The proof follows directly from the definition of maximal dimension with respect to $d_{j}$, since a subcode with this maximal dimension will have dimension $k_{j}$ which is an upper bound on the dimension of any $[n,j,d_{j}]$ subcode.
\end{proof}

The following proposition is a special case of Proposition~\ref{fixdj}; this proposition states that in fact the first minimum distance in the dictionary order ODP corresponds to a maximal subcode with respect to that minimum distance.

\begin{prop}\label{fixdj_special_case}
Let $C$ be an $[n,k,d]$ code.
Suppose that for some $j$, ODP$^{dic}[C]_j$ is the
first term in ODP greater than $d$. Then $j$ is the maximum
dimension with respect to ODP$^{dic}[C]_j$.
\end{prop}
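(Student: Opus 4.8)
The plan is to establish the single nontrivial inequality hidden in the statement. Write $d' = {\mbox{ODP}}^{dic}[C]_j$ and let $k_j$ denote the maximum dimension with respect to $d'$; the assertion is exactly $k_j = j$. The lower bound $k_j \ge j$ is immediate, since $d'$ appears as ${\mbox{ODP}}^{dic}[C]_j$ and Proposition~\ref{fixdj} then applies verbatim. Thus the whole content of the proposition is the reverse inequality $k_j \le j$, and this is where the dictionary-order optimality of the ODP must be exploited. First I would translate the hypothesis into the native $(0),\dots,(k-1)$ indexing. Putting $t = k-j$, the assumption that ${\mbox{ODP}}^{dic}[C]_j$ is the first ODP term exceeding $d$ says that the dictionary-order optimal chain $C = C_0 \supset C_1 \supset \cdots \supset C_{k-1}$, with $\dim C_i = k-i$, has profile $d_i = d(C_i)$ satisfying $d_0 = \cdots = d_{t-1} = d$ and $d_t = d' > d$. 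Note that a subcode of dimension $j+1$ is precisely a subcode occupying position $t-1$ in such a chain.

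The key step is to show that every subcode of $C$ of dimension $j+1$ has minimum distance exactly $d$; that is, $d$ cannot be beaten at dimension $j+1$. Suppose to the contrary that some $D \subseteq C$ has $\dim D = j+1$ and $d(D) > d$. I would then splice $D$ into a subcode chain: extend $D$ upward one dimension at a time to a flag $D \subset \cdots \subset C_0 = C$ and downward arbitrarily to dimension $1$. Every code in this chain is a subcode of $C$, so each of its profile entries $d_i'$ satisfies $d_i' \ge d(C) = d$, while its entry at position $t-1$ equals $d(D) > d$. Comparing this profile with the ODP from the left, let $s \le t-1$ be the least index with $d_s' > d$; for $i < s$ one has $d_i' = d$ (as $d_i' \ge d$ but $d_i' \not> d$), matching the ODP, and at $i = s$ our chain records $d_s' > d = d_s$. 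Hence our profile is an upper bound on the ODP in the dictionary order and is not equal to it, contradicting the optimality of the ODP. Therefore no dimension-$(j+1)$ subcode has minimum distance greater than $d$.

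Finally I would promote this to all higher dimensions: if some subcode $E$ with $\dim E = m \ge j+1$ had $d(E) \ge d' > d$, then any $(j+1)$-dimensional subcode $F \subseteq E$ would satisfy $d(F) \ge d(E) > d$, contradicting the previous paragraph. Thus no subcode of dimension exceeding $j$ attains minimum distance $d'$, so every maximal subcode with minimum distance $d'$ has dimension at most $j$, giving $k_j \le j$; together with $k_j \ge j$ this yields $k_j = j$. The one genuinely delicate point is the lift-and-extend argument of the second paragraph: it is exactly because the entries $d_0,\dots,d_{t-1}$ are all pinned to the minimal value $d$ that one can insert $D$ into a complete chain without ever dropping below the ODP before position $t-1$, which is what makes dictionary-order optimality bite and forces the contradiction.
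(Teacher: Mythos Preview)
Your proof is correct and follows essentially the same approach as the paper's: both reduce to showing that no $(j+1)$-dimensional subcode can have minimum distance exceeding $d$, deriving a contradiction with dictionary-order optimality. The paper's argument is much terser---it simply asserts that an $[n,j+1,d']$ subcode would force ${\mbox{ODP}}^{dic}[C]_{j+1}=d'$ ``by definition of the dictionary order''---whereas you make this explicit by splicing such a subcode into a complete chain and comparing profiles from the left; your version is the honest unpacking of that one-line claim. (Note also that the paper's proof contains a typo: it writes ``$j$ is greater than the maximum dimension'' where ``less than'' is intended, and it omits the easy lower bound $k_j\ge j$ that you correctly pull from Proposition~\ref{fixdj}.)
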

\begin{proof}
If ODP$^{dic}[C]_j$ is the first term in ODP greater than $d$, then ODP$^{dic}[C]_{j+1} = d$ where $0<j< k$.
Suppose to the contrary that $j$ is greater than the maximum dimension with respect to ODP$^{dic}[C]_j$, then there must exist an $[n,j+1]$ subcode with minimum distance ODP$^{dic}[C]_j$.  This implies ODP$^{dic}[C]_{j+1} =$ ODP$^{dic}[C]_j$ by definition of the dictionary order.  Compiling this information we obtain the contradiction: $d=$ ODP$^{dic}[C]_{j+1}=$ ODP$^{dic}[C]_j > d$.
\end{proof}

Propositions~\ref{maxODP}, \ref{fixdj}, and~\ref{fixdj_special_case} give insight into the relation between maximum dimension subcodes and optimum distance profiles.  If a code contains an optimal subcode (minimum distance optimal, dimension optimal, or both) there are many cases where this subcode appears in the subcode chain involved in an optimum distance profile.  However, this is not always the case as in the following example:
\begin{ex}
Let $C$ be the [6,5,1] code with the following generator matrix:
 \[G = \left[ \begin{array}{lll}
1 1 & 11 &00 \\
11 & 00  &11 \\
10 & 10  &10 \\
10 & 10  &00 \\
10 & 00  &00 \\
\end{array}
\right]
\]
By expurgating weight 1 vectors from $C$ we may obtain $[6,4,2]$ subcodes of $C$.  Since there does not exist a $[6,4,3]$ code (see~\cite{codetables}), we may conclude that {ODP}$^{dic}[C]_4 = 2$.  By examining all $[6,4,2]$ subcodes of $C$ it can be determined that none contain a $[6,3,3]$ subcode, and since no $[6,3,4]$ code exists we obtain {ODP}$^{dic}[C]_3 = 2$.  Finally, there is a unique $[6,2,4]$ code (which has a single non-zero weight of 4); as this code is a subcode of at least one $[6,4,2]$ subcode of $C$, and since there does not exist a $[6,2,5]$ code we may conclude {ODP}$^{dic}[C]_2 = 4$ and {ODP}$^{dic}[C]_1 = 4$.  Therefore the optimum distance profile in dictionary order is {ODP}$^{dic}[C]=[1,2,2,4,4]$.

Using similar arguments the ODP in inverse dictionary order is obtained as {ODP}$^{inv}[C]=[1,2,2,3,5]$.  Notice that the first three rows of $G$ generate an optimal [6,3,3] code (both minimum distance optimal and dimension optimal).  Therefore the maximum dimension with respect to minimum distance $d'=3$ is $k'=3$.  However, the subcodes of dimension 3 appearing in both ODP orders have minimum distance 2.  An explanation for this phenomenon is that all supercodes of the [6,3,3] code in $C$ have minimum distance 1.  This is an example where equality is not possible in Proposition~\ref{maxODP} and in Proposition~\ref{fixdj}.
\end{ex}

\bigskip




\section{Algorithms based on cosets}

Given an $[n,k,d]$ code $C$ which has small length and dimension it may be relatively easy to examine its subcode structure by a brute force generation of all possible subcodes.  However, as length and dimension increase this method becomes very time consuming; this is why we propose four algorithms based on cosets which are relatively efficient in comparison to the brute force search.  The first two algorithms, called the {\em Chain Algorithms} are useful in the classification in the sense that when applying them we obtain a complete list of inequivalent subcodes (respectively supercodes), with prescribed minimum distance, contained in (respectively containing) the given code $C$; in this way, the redundant cases considered in a brute force search are eliminated.  The remaining two {\em Random Algorithms} are random versions of the Chain Algorithms, and especially useful for very large length and dimension, where the exhaustive search is infeasible.  The Random Algorithms can also give results much faster than the Chain Algorithms since all cases are not considered.

\medskip

\noindent
{\bf (Subcodes) Chain Algorithm I:}  An algorithm to produce all maximal subcodes with maximum dimension $k'$ and minimum distance $d' \ge d$.

\begin{enumerate}

\item Input:
Begin with a binary $[n,k,d]$ code $D$ and a positive integer $d' \ge d$ (such that there exists a codeword of weight $d'$ in $D$).

\item Output: Produce the maximum dimension $k'$ among all maximal subcodes with minimum distance $d'$ and a list of inequivalent maximal subcodes of this dimension and minimum distance $d'$.

\medskip

\begin{enumerate}
\item Initialize the set ${\bf B}_1 = \{ D^\perp \}$.  Begin with $i=1$.
\item Build a set ${\bf B}_{i+1}$ of all inequivalent supercodes of dimension $1$ higher of $C$ for all $C \in {\bf B}_i$.  In order to do this we add coset representatives from ${\F_q^n} / C$ to each code $C$ in ${\bf B}_i$.
\item  Check if $d(C^\perp) = d'$ for any code $C \in {\bf B}_{i+1}$.  If ``No'' for any $C \in {\bf B}_{i+1}$, then repeat step (ii) by increasing $i$ to $i+1$.  If ``Yes'', then output the maximum dimension $k'=k-i+1$ and the set of $[n,k-i+1,d']$ subcodes of $D$.
\end{enumerate}

\end{enumerate}

\noindent
 {\bf (Supercodes) Chain Algorithm II:}  An algorithm to find all $[n,k,d]$ supercodes containing an $[n,k',d']$ code with $d' \ge d$ and  $k \ge k'$\\

\begin{enumerate}

\item Input:
Begin with a set ${\bf C}_{k',d'}$ of inequivalent $[n,k',d']$ codes (respectively self-orthogonal codes) with $k \ge k'$ and $d' \ge d$.

\item Output: For each code $C$ in ${\bf C}_{k',d'}$, produce all $[n,k,d]$ codes (respectively self-orthogonal codes) containing $C$.

\medskip
\begin{enumerate}
\item Begin by building a set of all inequivalent supercodes (respectively self-orthogonal supercodes) of dimension $1$ higher of each code $C$ in ${\bf C}_{k',d'}$ with minimum distance greater than or equal to $d$.  In order to do this we add coset representatives from  ${\F_q^n} / C$ (respectively $C^\perp /C$ if $C$ is self-orthogonal) to each code $C$ in ${\bf C}_{k',d'}$ and keep a set of inequivalent supercodes  ${\bf C}_{k'+1}$ generated in this way.
\item  Repeat the first step, by replacing ${\bf C}_{k',d'}$ with ${\bf C}_{k'+1}$ until the set of inequivalent codes which are generated have dimension $k$.
\item  Stop once dimension $k$ is reached.  For each code $C$ in ${\bf C}_{k',d'}$ output all $[n,k,d]$ supercodes of $C$.
\end{enumerate}

\end{enumerate}

\noindent {\bf Analysis and comparison of our algorithms:}

\comment{
Our Chain Algorithms reduce the complexity of calculation by checking in each round the equivalence of all the codes of the same dimension in {\em chains of codes} obtained from a set of given codes. This is one of the two time consuming steps. Another time consuming step is to consider all coset representatives from ${\F_q^n} / C$.
On the other hand, the algorithms given in Yan,~et.~al.~\cite{YanZhuLuo} construct all subcodes of the same dimension not necessarily in chains of codes. Hence their algorithms are computing more than needed (hence less efficient) in calculating ODPs of linear codes.}
Given an $[n,k]$ code $C$, the search for subcodes of dimension $k'$ may be conceptualized as a search tree with root $C$ and each node of branch distance $b$ from $C$ given by a $[n,k-b]$ subcode.  A brute-force search of the subcodes of dimension $k'$ for an $[n,k]$ code searches through all branches of the search tree up to distance $k-k'$; this search has complexity given by the Gaussian binomial coefficient {\small{$\left[ \begin{array}{c} k \\ k' \end{array} \right]_2$}}.  The Chain Algorithms greatly reduce this search by ``pruning'' the search tree in two manners.  First, we keep only inequivalent subcodes (resp. supercodes) at each branch level (in addition this keeps the search efficient memory-wise).  Second, branches can only extend from subcodes that were preserved in the previous step creating a  {\em chain of subcodes}.  In comparison, the algorithms given in Yan,~et.~al.~\cite{YanZhuLuo} construct all subcodes of the same dimension not necessarily in chains of codes; this method corresponds to searching all nodes at a given branch distance (many of which are redundant).

For example, a brute-force search of the subcodes of dimension $k'$ for an $[n,k]$ code has complexity given by the Gaussian binomial coefficient {\small{$\left[ \begin{array}{c} k \\ k' \end{array} \right]_2$}}.
  In Section~\ref{n=32} for some $[32,16,8]$ codes we determine the maximum dimension subcode with respect to $d=12$ to have dimension 11. A brute-force subcode search (such as the subcodes traversing algorithm in~\cite{YanZhuLuo}) would have to enumerate ${\small{\left[ \begin{array}{c} 16 \\ 11 \end{array} \right]_2}} = 120,843,139,740,969,555 $ subcodes; this task is not feasible.

\medskip

\begin{ex}
 As a more concrete example, we determine the ODPs for the four optimal $[28,7,12]$ self-complementary codes classified in~\cite{DodEnKa}.  These codes are doubly-even with non-zero weights 12,16,28.
We begin with a $[28,3,16]$ constant weight code (meaning the only non-zero weight is 16).  There is only one such code due to the fact that all non-zero codewords must intersect in exactly 8 positions; if the first two basis vectors are fixed, then there is only one possibility (up to coordinate permutation) for the third basis vector.  By adding the all-one vector to the constant weight code we obtain a $[28,4,12]$ code with the following generator matrix:
 \[G_{[28,4,16]} = \left[ \begin{array}{lllllll}
    1 1 1 1 &0 0 0 0 &0 0 0 0 &1 1 1 1 &0 0 0 0 &1 1 1 1 &1 1 1 1\\
    0 0 0 0 &1 1 1 1 &0 0 0 0 &1 1 1 1 &1 1 1 1 &0 0 0 0 &1 1 1 1\\
    0 0 0 0 &0 0 0 0 &1 1 1 1 &1 1 1 1 &1 1 1 1 &1 1 1 1 &0 0 0 0\\
    1 1 1 1 &1 1 1 1 &1 1 1 1 &1 1 1 1 &1 1 1 1 &1 1 1 1 &1 1 1 1
\end{array}
\right]
\]
Applying (Supercodes) Chain Algorithm II to this generator matrix (and only keeping doubly-even supercodes) we obtain all four self-complementary $[28,7,12]$ codes with the following generator matrices:
 \[ \left[ \begin{array}{c}
              G_{[28,4,16]}\\
    0 1 0 0 0 1 0 0 0 1 0 1 1 0 1 0 0 1 0 0 1 0 1 1 1 0 0 1\\
    0 0 1 0 0 1 1 1 0 1 1 1 0 1 1 1 0 0 1 1 1 1 0 0 1 1 0 0\\
    0 0 0 1 0 0 0 1 0 0 0 1 1 1 1 0 0 1 0 1 0 1 0 1 0 0 1 1\\
\end{array}
\right],
\left[ \begin{array}{c}
              G_{[28,4,16]}\\
   0 1 0 0 0 1 0 0 0 1 0 0 1 1 0 1 0 0 1 1 1 0 1 0 1 0 1 0\\
   0 0 1 0 0 1 1 1 0 1 1 1 0 1 1 1 0 0 1 1 1 1 0 0 1 1 0 0\\
   0 0 0 1 0 0 0 1 0 0 0 1 1 1 1 0 0 1 0 1 0 1 0 1 0 0 1 1\\

\end{array}
\right],
\]

 \[ \left[ \begin{array}{c}
              G_{[28,4,16]}\\
    0 1 0 0 0 1 0 0 0 1 0 0 1 0 1 1 0 1 0 1 0 0 1 1 0 1 1 0\\
    0 0 1 0 0 1 1 1 0 1 1 1 0 1 1 1 0 0 1 1 1 1 0 0 1 1 0 0\\
    0 0 0 1 0 0 0 1 0 0 0 1 1 1 1 0 0 1 0 1 0 1 0 1 0 0 1 1\\

\end{array}
\right],
\left[ \begin{array}{c}
              G_{[28,4,16]}\\
    0 1 0 1 0 0 1 1 0 0 1 1 1 0 1 0 0 0 0 0 1 0 0 1 1 0 1 0\\
    0 0 1 1 0 0 0 0 0 0 1 1 0 0 1 1 0 0 1 1 0 0 1 1 0 0 1 1\\
    0 0 0 0 0 1 1 0 0 1 0 1 1 0 1 0 0 1 0 1 1 0 1 0 1 1 0 0\\
\end{array}
\right].
\]

Let $C$ be any $[28,7,12]$ self-complementary code.
Since the $[28,3,16]$ subcode is optimal, in light of Lemma~\ref{fixdj_special_case}, we determine {ODP}$^{dic}[C]_3 = 16$.
As a [28,3,16] subcode cannot contain the all-one vector,
 we determine the ODP in dictionary order:
\[{\mbox{ODP}^{dic}}[C]=[12,12,12,12,16,16,16].
\]

The ODP in inverse order is clear since any supercode of the repetition code, containing a weight 16 vector, must also contain a weight 12 vector.  Hence
\[{\mbox{ODP}^{inv}}[C]=[12,12,12,12,12,12,28].
\]

\end{ex}

\bigskip

We now introduce the random algorithms which are random versions of the above coset algorithms:

\bigskip

\noindent
 {\bf Random (Subcodes) Algorithm I:} An algorithm to search for maximal subcodes

\begin{enumerate}

\item Input: A linear code $C$ with parameters $[n,k,d]$
and $d' > d$ where $A_{d'}$ is non-zero.

\item Output: A maximal subcode $C'$ of $C$ with
$d'$.

\medskip

\begin{enumerate}
\item Take any codeword $x$ from $C$ such that wt$(x) \ge
d'$. Let $C_1=\left<x \right>$.

\item Choose any coset representative $y$ of $C/{C_1}$. Let $C_1 :=\left<y \right>+C_1$. Repeat this until
$d(C_1) = d'$.

\item Repeat (b) until there is no coset representative such that
$d(C_1) = d'$. Let $C':=C_1$.
\end{enumerate}

\end{enumerate}

The below algorithm is somewhat opposite to Random Algorithm I.

\noindent {\bf Random (Supercode) Algorithm II:} An algorithm to search for codes containing
good codes

\begin{enumerate}

\item Input: A (best known) linear code $C_1$ with
parameters $[n,k,d]$ and {\em $d' < d$}.

\item Output: A code $C'$ containing $C_1$ with
$d'$ and {\em $k'
>
  k$}
 (if such a $C'$ exists).

\medskip

\begin{enumerate}

\item Let $C$:=${C_1}^\perp$.

\item Choose any coset representative $y$ of $C/{C_1}$. Let $C_1 :=\left<y \right>+C_1$. Repeat this until
$d(C_1) = d'$.

\item Repeat (b) until there is no coset representative such that
$d(C_1) = d'$. Let $C':=C_1$.

\end{enumerate}

\end{enumerate}

\begin{ex}

Using their traversing algorithms, the authors~\cite{YanZhuLuo} have
determined ODPs of a quasi-cyclic $[48,10,20]$ code $C_{48}$ by
finding all $k$-dimensional subcodes of $C$ which is extensive work.
Using the above Random Algorithms, we have also computed ODPs of
$C_{48}$ in the {\em dictionary} and {\em inverse dictionary} orders
{\em in a minute} as follows:

\[{\mbox{ODP}^{dic}}[C_{48}]=[20, 20, 20, 20, 24, 24, 24, 24, 32, 32],
\]
\[{\mbox{ODP}^{inv}}[C_{48}]=[20, 20, 20, 20, 20, 20, 20, 24, 28, 36
].
\]

\end{ex}

\medskip

\section{ODP of Type II self-dual codes}

In this section, we determine the ODP of binary Type
II codes of lengths up to $24$ and the extremal Type II codes of
length $32$.

\medskip

\subsection{$n=8$}
 For length $n=8$, there is a unique binary Hamming $[8,4,4]$ code $e_8$.
It has two non-zero weights $4$ and $8$. It is clear that there is a
unique subcode $\left< {\bf 1} \right>$ of $e_8$ with $d_4=8$. Hence
\[
{\mbox{ODP}}[e_8]=[4,4,4,8].
\] One generator matrix with respect to the ODP
in the dictionary order is

 \[G(e_{8}) = \left[ \begin{array}{llll}
1 1 & 11 &11 & 11 \\
 \hline
00 &00  &11  &11\\
00 &11  &00  &11 \\
01 &01  &01  &01\\
\end{array}
\right].
\]



\subsection{$n=16$}

Next let us consider $n=16$.
There are two Type II $[16,8,4]$ codes, denoted by $d_{16}$ and $2e_8$~\cite{ConPle} (blank represents $0$):

 \[ G(d_{16}) = \left[ \begin{array}{llllllll}
11&11 &  &  &  &  &  &\\
11&   &11&  &  &  &  & \\
11&   &  &11&  &  &  &\\
11&   &  &  &11&  &  & \\
11&   &  &  &  &11&  & \\
11&   &  &  &  &  &11& \\
11&   &  &  &  &  &  &11 \\
\hline
1 & 1 & 1&1 &1 &1 & 1&1 \\
\end{array}
\right]
{\mbox{ and }}
G(2e_8)=G(e_8) \oplus G(e_8)
\]

The next higher weight in $d_{16}$ is $8$. We have constructed a $[16,5,8]$
subcode of $d_{16}$. This subcode is equivalent to the first order
Reed-Muller code $R(1,4)$ and hence is unique up to equivalence~\cite{Til}. As
there is no $[16,6,8]$ code~\cite{codetables}, we know that $k=5$ is the maximum
dimension with respect to $d=8$. Since $R(1,4)$ contains the all-one
vector, we have

\[{\mbox{ODP}}[d_{16}]=[4,4,4,8,8,8,8,16].
\]
 Considering some linear combinations of the rows of $G(d_{16})$, we give below one
generator matrix with respect to the ODP in the dictionary order.

 \[ G'(d_{16}) = \left[ \begin{array}{llllllll}
11&11&11&11& 11 &11 &11 &11 \\
\hline
11&11&11&11&    &   &   &  \\
11&11&  &  & 11&11&   &  \\
11&  &11&  &11 &  & 11&  \\
1 &1 &1 &1 &1 &1 &1 &1 \\
\hline
11&   &  &11&  &  &  &\\
11&   &  &  &  &11&  & \\
11&   &  &  &  &  &  &11 \\
\end{array}
\right]
\]

In a similar manner, we have verified that
$2e_8$ has a maximal $[16,5,8]$ subcode, which is generated
by the first five rows of $G'(d_{16})$. Hence we have

\[{\mbox{ODP}}[2e_8]=[4,4,4,8,8,8,8,16].
\]
 We give below one
generator matrix with respect to the ODP in the dictionary order.

 \[ G(2e_8) = \left[ \begin{array}{llllllll}
 11&11&11&11& 11 &11 &11 &11 \\
\hline
11&11&11&11&    &   &   &  \\
11&11&  &  & 11&11&   &  \\
11&  &11&  &11 &  & 11&  \\
1~~&1~~&1~~&1~~&1~~&1~~&1~~&1~~ \\
\hline
 & & & & 11 & 11 &    &\\
 & & & & 11 &    & 11 &\\
 & & & & 1 & 1 & 1 & 1 \\
\end{array}
\right]
\]

\medskip


\medskip

As a summary, we have

\begin{thm}
\[{\mbox{ODP}}[d_{16}]={\mbox{ODP}}[2e_8]=[4,4,4,8,8,8,8,16].
\]
\end{thm}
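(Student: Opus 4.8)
The statement is a summary of the two computations carried out just above for the codes $d_{16}$ and $2e_8$, so the plan is to prove both equalities at once by isolating the single structural feature that drives them. The key object is the common $[16,5,8]$ subcode $R\cong R(1,4)$ exhibited above (generated by the first five rows of $G'(d_{16})$, which also generate a $[16,5,8]$ subcode of $2e_8$): it contains $\1$ and, being the first-order Reed--Muller code, has nonzero weights exactly $8$ and $16$. I would first record two bounding facts. Since both codes are doubly-even, every subcode has minimum distance a multiple of $4$ and at least $d(C)=4$, so each ODP entry lies in $\{4,8,12,16\}$. Since no $[16,6,8]$ code exists~\cite{codetables}, every subcode of dimension $\ge 6$ has minimum distance exactly $4$; this forces the entries at dimensions $8,7,6$ to equal $4$ in every subcode chain, independently of the ordering used.

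Next I would build the explicit chain realizing the claimed profile. Starting from $R=C_3$ and descending one dimension at a time while keeping $\1$ in every subcode down to $C_7=\gen{\1}$, I obtain a chain $R=C_3\supset C_4\supset C_5\supset C_6\supset C_7$ of dimensions $5,4,3,2,1$. Because every nonzero codeword of $R$ has weight $8$ or $16$, each $C_i$ with $i\le 6$ has minimum distance $\ge 8$; the nonexistence of $[16,k,12]$ codes for $k=2,3,4,5$ caps these at $8$, while $C_7=\gen{\1}$ has weight $16$. Prefixing this chain with any flag $C=C_0\supset C_1\supset C_2\supset C_3=R$ (whose top three distances are $4$ by the bounding fact) yields a distance profile equal to $[4,4,4,8,8,8,8,16]$, so this profile is attained by both codes.

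It remains to argue that $[4,4,4,8,8,8,8,16]$ is an upper bound in both orders. In the dictionary order the first three entries are forced to $4$; the fourth entry (the dimension-$5$ subcode) is maximized at $8$ by the existence of the $[16,5,8]$ subcode, consistent with Proposition~\ref{fixdj_special_case}, which identifies dimension $5$ as maximal with respect to $d'=8$; and the remaining entries are then the largest admissible values $8,8,8,16$. In the inverse dictionary order the last entry is maximized at $16$ by $\1$, the next four are maximized at $8$ along the $R$-chain, and the top three are forced to $4$. I expect the only real obstacle to be the verification that a single chain attains the optimum simultaneously in both orders; this is exactly where the clean weight spectrum of $R(1,4)$ (only $8$ and $16$) combined with the absence of a $[16,6,8]$ code is decisive, since it leaves no slack to trade a larger early entry against a smaller late one. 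The one fact genuinely requiring the earlier computation is that $2e_8$, like $d_{16}$, actually contains such a copy of $R(1,4)$; granting that, together with the uniqueness of $R(1,4)$ as the $[16,5,8]$ code, the two ODPs coincide and equal $[4,4,4,8,8,8,8,16]$.
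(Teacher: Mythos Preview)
Your proposal is correct and follows essentially the same approach as the paper: both rely on the existence of the $[16,5,8]$ subcode $R(1,4)$ (containing $\1$ and having nonzero weights only $8$ and $16$) together with the nonexistence of a $[16,6,8]$ code. The paper's argument is terse and leaves the simultaneous optimality in both orders implicit, whereas you spell out the upper-bound reasoning for the dictionary and inverse dictionary orders separately; this added rigor is welcome but does not constitute a different method. One minor redundancy: your appeal to the nonexistence of $[16,k,12]$ codes for $2\le k\le 5$ is unnecessary for the \emph{attained} chain (since any dimension-$\ge 2$ subcode of $R(1,4)$ containing $\1$ automatically has minimum weight exactly $8$), though it is the right ingredient for the upper-bound half of the argument.
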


\subsection{$n=24$}

 Consider $n=24$. There are exactly nine Type II self-dual codes of length $24$.
 These are denoted by $A24 (2d_{12}), B24 (d_{10}+2e_7), C24 (3d_8), D24 (4d_6), E24 (d_{24}), F24 (6d_4), G24 (g_{24}),
 d_{16}+e_8,$ and $3e_8$ in the notations of~\cite{ConPle},~\cite{PleSlo}.
 The first seven codes are indecomposable and the rest are
 decomposable. Note that $G24 (g_{24})$ represents the binary Golay $[24,12, 8]$
 code.

Pollara,~et.~al.~\cite{PolCheMcE} constructed the first $[24, 5,
12]$ subcode $C_{24}^{5, 12}$ of $g_{24}$, improving a previously
known $[24, 5, 8]$ subcode. Note that $C_{24}^{5, 12}$ is
unique~\cite{Til}, has only two non-zero weights $12$ and $16$, and
has a $[24, 2, 16]$ subcode $C_{24}^{2, 16}$. As $C_{24}^{2, 16}$
satisfies the Griesmer bound, it has a generator matrix of which
each row has weight $16$~\cite{Til},~\cite{HufPle}. Hence it is easy
to see that $C_{24}^{2, 16}$ is unique.

Using this information, Luo, et. al.~\cite{LuoVinChe_2010} have
determined

\[
 \begin{array}{lll}
 {\mbox{ODP}}^{dic}[g_{24}]&=&[8,8,8,8,8,8,8,12,12,12,16,16] \\
  {\mbox{ODP}}^{inv}[g_{24}]&=&[8,8,8,8,8,8,8,8,12,12,12,24].
\end{array}
\]

However, less is known of the subcodes of the other Type II
self-dual codes of length $24$. We have checked that the unique
$[24, 5, 12]$ code is contained in any of the nine Type II codes of
length $24$.

Using (Subcodes) Chain Algorithm I we obtain inequivalent maximal $[24,k',8]$
subcodes of each Type II code of length 24 (with minimum distance 4).  Then applying
(Supercodes) Chain Algorithm II to the unique $[24,5,12]$ code for each Type II
 code of length 24 (with minimum distance 4) we obtain a $[24,k',8]$ code equivalent
to one of the maximal subcodes.
Therefore we determine the ODP in the dictionary
order of the Type II $[24,12,4]$ codes as follows.



\begin{thm}
\[
 \begin{array}{lll}
 {\mbox{ODP}}^{dic}[2d_{12}]&=&[4,4,4,8,8,8,8,12,12,12,16,16] \\
 {\mbox{ODP}}^{dic}[d_{10}+2e_7]&=&[4,4,4,8,8,8,8,12,12,12,16,16]\\
 {\mbox{ODP}}^{dic}[3d_{8}]&=&[4,4,8,8,8,8,8,12,12,12,16,16]\\
{\mbox{ODP}}^{dic}[4d_{6}]&=&[4,4,8,8,8,8,8,12,12,12,16,16]\\
{\mbox{ODP}}^{dic}[d_{24}]&=&[4,4,4,4,8,8,8,12,12,12,16,16]\\
{\mbox{ODP}}^{dic}[6d_{4}]&=&[4,8,8,8,8,8,8,12,12,12,16,16]\\
 {\mbox{ODP}}^{dic}[d_{16}+ e_8]&=&[4,4,4,8,8,8,8,12,12,12,16,16] \\
 {\mbox{ODP}}^{dic}[3e_8]&=&[4,4,4,8,8,8,8,12,12,12,16,16] \\
\end{array}
\]

\end{thm}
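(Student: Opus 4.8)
The plan is to prove each identity by exhibiting a single subcode chain that realizes the asserted profile and then verifying that no chain does lexicographically better in the dictionary order. Throughout I would exploit that every subcode of a Type II code is again doubly-even, so each entry $d_i$ lies in $\{4,8,12,16,20,24\}$; this converts the questions ``is $d_i=4$ or $d_i\ge 8$?'' (and likewise $8$ versus $\ge 12$, and $12$ versus $\ge 16$) into clean dichotomies and lets me locate the exact positions at which the profile jumps. Three inputs are used as black boxes: the uniqueness of the $[24,5,12]$ code $C_{24}^{5,12}$ and its containment in every Type II code of length $24$, the uniqueness of its $[24,2,16]$ subcode $C_{24}^{2,16}$, and the nonexistence (from~\cite{codetables}) of $[24,6,12]$, $[24,4,16]$, $[24,3,16]$ and $[24,2,20]$ codes.

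First I would dispose of the common tail $[\,\dots,12,12,12,16,16\,]$, which is shared by all eight codes and by $g_{24}$. Because the unique $C_{24}^{5,12}$ is contained in each code and no $[24,6,12]$ code exists, in any subcode chain passing through $C_{24}^{5,12}$ the value $12$ first occurs exactly at dimension $5$. Descending through $C_{24}^{5,12}\supset[24,4,12]\supset[24,3,12]\supset C_{24}^{2,16}\supset[24,1,16]$ and using that $C_{24}^{5,12}$ has weights only $12$ and $16$, that no $[24,4,16]$, $[24,3,16]$ or $[24,2,20]$ code exists, and that $C_{24}^{2,16}$ has constant weight $16$, pins the last five entries to $12,12,12,16,16$ in the dictionary order exactly as in the Golay computation of~\cite{LuoVinChe_2010}.

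The remaining, code-dependent, work is to locate the dimension $k'$ at which the profile first leaves the value $4$, that is, the maximum dimension with respect to $d'=8$. This is precisely the output of (Subcodes) Chain Algorithm I, and it is where the eight profiles diverge: reading off the lists gives $k'=11$ for $6d_4$; $k'=10$ for $3d_8$ and $4d_6$; $k'=9$ for $2d_{12}$, $d_{10}+2e_7$, $d_{16}+e_8$, $3e_8$; and $k'=8$ for $d_{24}$. The doubly-even dichotomy then forces every subcode of dimension exceeding $k'$ to minimum distance $4$, so the first $12-k'$ entries equal $4$; since a $[24,k',8]$ subcode exists while no $[24,k',12]$ code does (as $k'\ge 6$, using nonexistence of $[24,6,12]$), the dictionary order sets the entry at dimension $k'$ to $8$, consistently with Proposition~\ref{fixdj_special_case}, and every dimension from $k'$ down to $6$ is forced to minimum distance $8$. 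To see that all of these forced values are realized simultaneously by one chain through $C_{24}^{5,12}$, I would run (Supercodes) Chain Algorithm II upward from $C_{24}^{5,12}$ and verify that the resulting $[24,k',8]$ supercode is equivalent to one of the maximal subcodes produced by Chain Algorithm I; this compatibility is what splices the two halves of the chain together and certifies achievability of the claimed profile.

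The main obstacle is the computation of $k'$ for each code, since it is governed by the supports of the weight-$4$ codewords and hence by the component (``glue'') decomposition of each Type II code, and the eight answers genuinely differ. Establishing, for instance, that $6d_4$ admits a $[24,11,8]$ subcode while $2d_{12}$ admits no $[24,k,8]$ subcode of dimension above $9$ is the combinatorial heart of the argument. It is here that correctness rests on the exhaustiveness of the equivalence testing inside Chain Algorithm I, so that the maximality of $k'$ (the nonexistence of a $[24,k'+1,8]$ subcode) is genuinely certified, together with the successful upward matching produced by Chain Algorithm II.
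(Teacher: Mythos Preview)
Your proposal is correct and follows essentially the same approach as the paper: determine the maximal $[24,k',8]$ subcodes via (Subcodes) Chain Algorithm~I, then run (Supercodes) Chain Algorithm~II upward from the unique $[24,5,12]$ code to certify that some maximal $[24,k',8]$ subcode sits above it, thereby splicing the two halves of the chain. Your treatment is in fact more explicit than the paper's about why the common tail $12,12,12,16,16$ is forced (via the nonexistence of $[24,6,12]$, $[24,3,16]$, and $[24,2,20]$ codes and the weight structure of $C_{24}^{5,12}$), whereas the paper simply defers to the Golay computation of Luo~et~al.; but the substance and the computational core are the same.
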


For each Type II $[24,12,4]$ code we apply (Subcodes) Chain Algorithm I to
 the maximal $[24,k',8]$ subcodes (containing the all one vector) to obtain a
$[24,4,12]$ subcode (containing the all one vector).
Therefore we may determine the ODP in the inverse dictionary
order of the Type II $[24,12,4]$ codes as follows.

\begin{thm}
\[
 \begin{array}{lll}
 {\mbox{ODP}}^{inv}[2d_{12}]&=&[4,4,4,8,8,8,8,8,12,12,12,24] \\
 {\mbox{ODP}}^{inv}[d_{10}+2e_7]&=&[4,4,4,8,8,8,8,8,12,12,12,24]\\
 {\mbox{ODP}}^{inv}[3d_{8}]&=&[4,4,8,8,8,8,8,8,12,12,12,24]\\
{\mbox{ODP}}^{inv}[4d_{6}]&=&[4,4,8,8,8,8,8,8,12,12,12,24]\\
{\mbox{ODP}}^{inv}[d_{24}]&=&[4,4,4,4,8,8,8,8,12,12,12,24]\\
{\mbox{ODP}}^{inv}[6d_{4}]&=&[4,8,8,8,8,8,8,8,12,12,12,24]\\
 {\mbox{ODP}}^{inv}[d_{16}+ e_8]&=&[4,4,4,8,8,8,8,8,12,12,12,24] \\
 {\mbox{ODP}}^{inv}[3e_8]&=&[4,4,4,8,8,8,8,8,12,12,12,24] \\
\end{array}
\]

\end{thm}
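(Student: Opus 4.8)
The plan is to build the inverse-order subcode chain from the bottom up, exploiting the fact that the inverse dictionary order first maximizes the minimum distance of the smallest (one-dimensional) subcode and then works upward. Since each of these codes is self-dual, it contains the all-one vector $\1$ (because $\1 \cdot c = \wt(c) \equiv 0 \pmod 2$ for every $c$ in a self-dual code, so $\1 \in C^\perp = C$), and $\1$ is the unique codeword of weight $24$. Hence the maximum possible value of the last profile entry is $24$, attained only by taking $C_{k-1} = \gen{\1}$. Consequently every subcode in the optimal inverse chain must contain $\1$, and for any such subcode of dimension at least $2$, choosing a codeword $c \notin \gen{\1}$ gives the pair $c, \1 + c$ whose weights sum to $24$, so the subcode's minimum distance is at most $12$. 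This already forces every entry of the profile except the last to be at most $12$, matching the claimed sequences.

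Next I would pin down the distance-$12$ plateau, whose length is $3$ (dimensions $2,3,4$) uniformly across all eight codes. For the upper bound I would invoke the uniqueness up to equivalence of the $[24,5,12]$ code $C_{24}^{5,12}$ together with the fact that it contains a $[24,2,16]$ subcode: if some $[24,5,12]$ subcode contained $\1$, then adding $\1$ to a weight-$16$ codeword would produce a weight-$8$ codeword, contradicting $d = 12$. Since containing $\1$ is a permutation-invariant property and every $[24,5,12]$ code is equivalent to $C_{24}^{5,12}$, no $[24,5,12]$ subcode contains $\1$, so the largest subcode containing $\1$ with minimum distance $12$ has dimension at most $4$. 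For the matching lower bound I would take the maximal $[24,k',8]$ subcode containing $\1$ already produced for the dictionary-order theorem and apply (Subcodes) Chain Algorithm I to descend to a $[24,4,12]$ subcode that still contains $\1$; its subcodes of dimensions $2$ and $3$ containing $\1$ then supply the two remaining $12$'s.

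The remaining (distance $8$ and distance $4$) portion of each profile I would read off from the distance-$8$ analysis already carried out for the dictionary order, which determined the maximal dimension $k' \in \{8,9,10,11\}$ of a subcode with $d = 8$. The key structural observation is that a minimum-distance-$8$ subcode whose nonzero weights all lie in $\{8,12,16\}$ can be enlarged by $\1$ without lowering the distance, since each new codeword $\1 + c$ has weight $24 - \wt(c) \ge 8$; hence a maximal $[24,k',8]$ subcode can be taken to contain $\1$ and can be chained above the $[24,4,12]$ subcode found above. Beyond dimension $k'$ every subcode acquires a weight-$4$ codeword, and because all weights are multiples of $4$ while the ambient code already has minimum distance $4$, the distance collapses to exactly $4$ for all larger subcodes, forcing the top of each profile. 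Assembling $\gen{\1} \subset [24,4,12] \subset [24,k',8] \subset C$ into a single chain yields a distance profile equal to the claimed ODP, and the preceding bounds show it is optimal in the inverse dictionary order.

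The main obstacle is the optimality (upper-bound) direction at the distance-$8$ level: proving that no subcode chain through $\1$ can maintain $d = 8$ beyond dimension $k'$. This cannot be settled by the structural arguments above and requires the exhaustive, equivalence-reduced search performed by (Subcodes) Chain Algorithm I, since a brute-force enumeration of subcodes is infeasible even at length $24$. A second delicate point is verifying that a maximal distance-$8$ subcode of dimension $k'$ can genuinely be chosen through $\1$ in the cases where an unconstrained maximal subcode instead carries a weight-$20$ codeword (and therefore excludes $\1$); this too is confirmed by tracking only the subcodes containing $\1$ throughout the algorithm.
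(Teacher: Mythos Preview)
Your proposal is correct and follows essentially the same computational route as the paper: start from the maximal $[24,k',8]$ subcodes containing $\1$ (obtained via Chain Algorithm~I in the dictionary-order analysis), descend via Chain Algorithm~I to a $[24,4,12]$ subcode still containing $\1$, and assemble the chain $\gen{\1}\subset[24,4,12]\subset[24,k',8]\subset C$. The paper's own proof is just this single sentence, relying on the algorithmic output for both existence and optimality.

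Where you go beyond the paper is in supplying clean structural upper-bound arguments that the paper leaves implicit in the computation. Your observation that the unique $[24,5,12]$ code has a weight-$16$ codeword (hence cannot contain $\1$, since $\1+c$ would have weight $8$) gives a conceptual reason for the length-three distance-$12$ plateau, and your complement-pairing argument cleanly explains why every entry except the last is at most $12$. You are also right to flag that the equality between the unconstrained maximal $d=8$ dimension $k'$ from Table~\ref{tab:sub-24} and the maximal such dimension \emph{through} $\1$ is not forced by your weight-set enlargement argument alone (a maximal $[24,k',8]$ subcode might carry a weight-$20$ codeword), and that this coincidence is ultimately verified by the algorithm; the paper handles this the same way, simply by restricting the Chain Algorithm~I search to subcodes containing $\1$.
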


 Table~\ref{tab:sub-24} gives the maximum dimension with respect to minimum distance $d$ for the Type II length 24 codes.

 \begin{cor}
 For each Type II length 24 code, there are maximum dimension subcodes with respect to $d=8, 12, 16, 24$ (except $20$) that are involved in the subcode chain for the ODP in dictionary order or the inverse order.  Furthermore, each Type II length 24 code contains dimension optimal (and minimum distance optimal) subcodes with parameters $[24,5,12], [24,2,16], [24,1,24]$.
  \end{cor}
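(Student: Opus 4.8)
The plan is to read the conclusion off directly from the optimum distance profiles computed in the two theorems immediately above (together with the profile of $g_{24}$ recorded earlier), using Propositions~\ref{maxODP}, \ref{fixdj}, and~\ref{fixdj_special_case} to pass from the profiles to statements about maximum dimensions. The organizing observation is that, across all nine codes, the dictionary-order ODP realizes exactly the distance values $4,8,12,16$ (or $8,12,16$ for $g_{24}$) while the inverse-order ODP realizes $4,8,12,24$ (or $8,12,24$), so that the weight $20$ is attained in neither profile. I would accordingly dispose of the four listed values $d'=8,12,16,24$ first, and then isolate $d'=20$ as the exception.

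For $d'=8$ (relevant for the eight codes of minimum distance $4$) the value $8$ is the first term of the dictionary ODP exceeding $d=4$, so Proposition~\ref{fixdj_special_case} gives at once that the dimension at which $8$ first occurs is the maximum dimension with respect to $8$, realized inside the dictionary chain. For $d'=12$ and $d'=16$ the value is no longer the first jump, so Proposition~\ref{fixdj} only yields that the first dimension $j$ attaining the value is a lower bound $k_{d'}\ge j$; I would upgrade these to equalities by feeding in the nonexistence of a $[24,6,12]$ code and of a $[24,3,16]$ code. Both follow from the Griesmer bound (e.g.\ $16+8+4=28>24$ rules out $[24,3,16]$) and are recorded in~\cite{codetables}, forcing $k_{12}=5$ and $k_{16}=2$, exactly the dimensions occurring in the dictionary chain. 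For $d'=24$ the maximum dimension is $1$, since the all-one vector is the unique weight-$24$ word and hence no $[24,2,24]$ code exists; this dimension-$1$ subcode is the bottom of every inverse-order chain.

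The heart of the argument is the exclusion of $d'=20$. Here I would first show that the maximum dimension with respect to $20$ is $1$: in any putative $[24,2,20]$ subcode two independent weight-$20$ words $x,y$ satisfy $|\supp(x)\cap\supp(y)|\ge 20+20-24=16$, whence $\wt(x+y)=40-2\,|\supp(x)\cap\supp(y)|\le 8<20$, a contradiction. Thus $d'=20$ can only be realized by a dimension-$1$ subcode; but since $20$ appears in neither ODP, the dimension-$1$ subcode sitting at the bottom of each chain always has minimum distance $16$ (dictionary) or $24$ (inverse), never $20$. Consequently the maximum dimension subcode with respect to $20$ is never part of a chain, which is precisely the stated exception.

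The ``furthermore'' clause is then a matter of collecting the subcodes already exhibited: the $[24,5,12]$, $[24,2,16]$, and $[24,1,24]$ subcodes lying in these chains are each simultaneously minimum distance optimal and dimension optimal, as the Griesmer computations above (and~\cite{codetables}) confirm. I expect the main difficulty to be purely organizational rather than conceptual, namely verifying uniformly for all nine codes that the dimension at which each value first surfaces in the profile equals the independently computed maximum dimension; the only genuine mathematical inputs are the two nonexistence results for $d'=12,16$ and the short intersection estimate ruling out $[24,2,20]$.
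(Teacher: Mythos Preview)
Your proposal is correct and follows exactly the path the paper intends: the corollary is stated in the paper without proof, as a direct reading of the two ODP theorems together with Table~\ref{tab:sub-24}, and your argument supplies precisely the verification the reader is expected to perform. The only additions you make beyond the paper are the Griesmer checks ruling out $[24,6,12]$ and $[24,3,16]$ (which the paper takes for granted via~\cite{codetables}) and the short intersection argument for $d'=20$; both are routine and correct, and the latter is a nice explanation of the parenthetical ``except $20$'' that the paper leaves unjustified.
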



\begin{table}[thb]
 \caption{Subcodes of All Type II codes of $n=24$}
 \label{tab:sub-24}
 \begin{center}
\begin{tabular}{|l|c|c|}
\noalign{\hrule height1pt}

Codes & max. dim. & max. dim.    \\
      & with $d=8$  & with $d=12$  \\
       \hline
$2d_{12}$    & 9 & 5   \\
$d_{10}+2e_7$ & 9 & 5  \\
$3d_8$        & 10 & 5 \\
$4d_6$        & 10 & 5 \\
$d_{24}$      & 8 & 5  \\
$6d_{4}$      & 11 & 5 \\
$d_{16}+ e_8$ & 9 & 5  \\
$3e_8$        & 9 & 5  \\
$g_{24}$      & 12  & 5 \\
 \noalign{\hrule height1pt}
\end{tabular}
\end{center}
\end{table}

\subsection{$n=32$}\label{n=32}

As there are $85$ Type II self-dual codes of length $32$, we focus
on extremal Type II self-dual $[32, 16, 8]$ codes. There are exactly
five Type II self-dual $[32, 16, 8]$ codes, denoted by $C81$ (or
$q_{32}$), $C82$ (or $r_{32}$, $R(2,5)$), $C83$ (or $2g_{16}$),
$C84$ (or $8f_4$), $C85$ ($16f_2$)~in the notation
of~\cite{ConPle},~\cite{ConPleSlo}. Using symplectic geometric
approach, Maks and Simonis~\cite{MakSim} show that the second order
Reed-Muller code $r_{32}$ contains exactly two inequivalent $[32,
11, 12]$ codes, each of which further contains the first order Reed-Muller
$[32, 6, 16]$ code $R(1,5)$. Note that any $[32, 6, 16]$ code is
equivalent to $R(1,5)$. Furthermore, Jaffe~\cite{Jaf} proved using
his language {\tt Split} that there exist exactly two $[32, 11, 12]$
codes. These subcodes have optimal dimensions for each minimum
distance. Hence Chen and Han Vinck~\cite{CheVin_2010} have determined the ODP in the
dictionary order for $r_{32}$ as follows:
\[{\mbox{ODP}}[r_{32}]=[8,8,8,8,8,12,12,12,12,12,16,16,16,16,16,32].
\]

On the other hand, little was known of the subcodes of the other four extremal Type II $[32,16,8]$ codes. We show that they also have the same optimum distance profiles as $r_{32}$ does.

\medskip

Using (Supercodes) Chain Algorithm II with $C_{k',d'}=\{R(1,5) \}$, we independently construct two
inequivalent $[32,11,12]$ codes in $r_{32}$ containing $R(1,5)$,
denoted by $RC_1$ and $RC_2$. We note that dim$(RC_1 \cap RC_2) =10$. Using (Supercodes) Chain Algorithm II, we have checked that each of $RC_1$ and $RC_2$ is a subcode of any of the five Type II $[32, 16, 8]$ codes. We denote the five codes based on $RC_1$ ($RC_2$, respectively) by $C81^1, \dots, C85^1$ ($C81^2, \dots, C85^2$, respectively).

Hence we obtain:

\begin{thm} Each code $C$ of the five Type II $[32, 16, 8]$ codes has
\[{\mbox{ODP}}[C]=[8,8,8,8,8,12,12,12,12,12,16,16,16,16,16,32].
\]
\end{thm}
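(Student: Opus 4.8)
The plan is to prove, for each of the five codes $C$, that both ODP$^{dic}[C]$ and ODP$^{inv}[C]$ equal the stated profile, by exhibiting one subcode chain that realizes it and then showing no distance profile beats it in either order. The guiding observation is that the profile encodes a nested family of \emph{optimal} subcodes: the distances jump $8 \to 12 \to 16 \to 32$ exactly at dimensions $11$, $6$, $1$, giving parameters $[32,16,8]$, $[32,11,12]$, $[32,6,16]$, $[32,1,32]$. The preceding discussion already supplies the containments $\1 \in R(1,5) \subset RC_i \subset C$ for $i=1,2$ and for each of the five codes (the relations $RC_i \subset C$ being the output of (Supercodes) Chain Algorithm~II, the relation $R(1,5)\subset RC_i$ holding by construction, and $\1\in R(1,5)$ automatic). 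So I only need to convert these containments into the two ODP statements, using the code-table facts that $11$ is the maximum dimension with respect to $d=12$, that $6$ is the maximum dimension with respect to $d=16$ with $R(1,5)$ the unique $[32,6,16]$ code, and that $\1$ is the only weight-$32$ word.

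For achievability I would refine the chain $\langle\1\rangle \subset R(1,5) \subset RC_1 \subset C$ into a complete flag $C=C_0\supset C_1\supset\cdots\supset C_{15}=\langle\1\rangle$ by inserting subcodes in each of the three dimension-$5$ quotients $C/RC_1$, $RC_1/R(1,5)$, $R(1,5)/\langle\1\rangle$. Any code strictly between $C$ and $RC_1$ is a doubly-even subcode of $C$ of dimension $\ge 12$, so $8\le d<12$ forces $d=8$; any code strictly between $RC_1$ and $R(1,5)$ is a subcode of $RC_1$ (so $d\ge 12$) of dimension $\ge 7$ (so $d<16$), forcing $d=12$; and any code between $R(1,5)$ and $\langle\1\rangle$ lies inside $R(1,5)$, whose nonzero weights are only $16$ and $32$, so it has $d=16$, with $d=32$ only for $\langle\1\rangle$ itself. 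This yields precisely the sequence $[8,8,8,8,8,12,12,12,12,12,16,16,16,16,16,32]$.

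For optimality in the dictionary order I would argue greedily from the top. Since $d(C)=8$ and no $[32,m,12]$ code exists for $m\ge 12$, the first five terms are forced to $8$; since no $[32,m,16]$ code exists for $m\ge 7$, the next five terms are at most $12$ and equal $12$ by the flag above. The transition to distance $16$ is the crux: the Griesmer bound excludes a $[32,6,20]$ code, so dimension $6$ attains at most $16$, realized only by $R(1,5)$, which is unique up to equivalence. Hence once the greedy chain reaches its dimension-$6$ subcode of distance $16$ it \emph{must} be $R(1,5)$, so every later (smaller) code lies inside $R(1,5)$ and has $d\in\{16,32\}$, with $32$ possible only at dimension $1$; this pins the tail to $[16,16,16,16,32]$. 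For the inverse order I would run the mirror argument from the bottom: the dimension-$1$ term is maximized at $32$ by $\1$, and since $\min(\wt(x),32-\wt(x))\le 16$ the dimension-$2$ term is at most $16$, while the maximum dimensions $6$ and $11$ again fix where the distance must drop. The two orders coincide precisely because $\1\in R(1,5)$: the maximal dimension-$1$ subcode already sits inside the forced dimension-$6$ subcode, so both greedy chains can route through $\langle\1\rangle \subset R(1,5) \subset RC_i \subset C$.

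The main obstacle I anticipate is this forcing step rather than any single computation: one must argue rigorously that the greedy chain in each order is compelled to pass through the unique $[32,6,16]$ code $R(1,5)$, so that the low-dimensional tail inherits the weight structure $\{16,32\}$ and cannot be improved by some competing dimension-$6$ subcode. This uses only the uniqueness of $R(1,5)$ and the universal extremality and Griesmer facts, none of which depend on which of the five codes is chosen; the sole code-specific input is $RC_i\subset C$, already verified by Chain Algorithm~II. Consequently the derivation is identical to the one Chen and Han Vinck carried out for $r_{32}$ and transfers verbatim to $q_{32}$, $2g_{16}$, $8f_4$, and $16f_2$, giving the claimed common ODP.
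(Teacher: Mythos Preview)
Your proposal is correct and follows essentially the same approach as the paper: both rest on the computationally verified containments $\langle\1\rangle\subset R(1,5)\subset RC_i\subset C$ for each of the five codes, together with the dimension-optimality facts (no $[32,12,12]$, no $[32,7,16]$, Griesmer at $[32,6,16]$, uniqueness of $R(1,5)$). The paper delegates the deduction of the ODP from these ingredients to the Chen--Han Vinck result for $r_{32}$, whereas you spell out the achievability and upper-bound arguments in both orders explicitly; your added forcing step through $R(1,5)$ is exactly the content that makes the tail $[16,16,16,16,32]$ rigorous.
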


One
generator matrix for each Type II $[32, 16, 8]$ code with respect to the ODP in the dictionary order is
given in the appendix.

{\small
 \[ RC_1  = \left[ \begin{array}{c}
1 1 1 1 1 1 1 1 1 1 1 1 1 1 1 1 1 1 1 1 1 1 1 1 1 1 1 1 1 1 1 1\\
\hline
0 0 0 0 0 0 0 0 0 0 0 0 0 0 0 0 1 1 1 1 1 1 1 1 1 1 1 1 1 1 1 1\\
0 0 0 0 0 0 0 0 1 1 1 1 1 1 1 1 0 0 0 0 0 0 0 0 1 1 1 1 1 1 1 1\\
0 0 0 0 1 1 1 1 0 0 0 0 1 1 1 1 0 0 0 0 1 1 1 1 0 0 0 0 1 1 1 1\\
0 0 1 1 0 0 1 1 0 0 1 1 0 0 1 1 0 0 1 1 0 0 1 1 0 0 1 1 0 0 1 1\\
0 1 0 1 0 1 0 1 0 1 0 1 0 1 0 1 0 1 0 1 0 1 0 1 0 1 0 1 0 1 0 1\\
\hline
1 0 0 0 0 0 0 1 0 0 0 1 0 1 1 1 0 1 0 0 1 1 0 1 0 0 1 0 0 1 0 0\\
0 1 0 0 0 0 0 1 0 0 0 1 0 1 0 0 0 0 1 0 0 1 1 1 1 0 0 0 1 1 0 1\\
0 0 1 0 0 0 0 1 0 1 0 0 0 1 1 1 0 1 1 1 0 1 0 0 0 0 0 1 0 0 1 0\\
0 0 0 0 1 0 0 1 0 0 0 0 1 0 0 1 0 1 0 1 1 1 0 0 1 0 1 0 0 0 1 1\\
0 0 1 0 0 0 0 1 0 0 0 1 0 0 1 0 0 0 0 1 1 1 0 1 1 1 0 1 0 0 0 1\\
\end{array}
\right],~~
 RC_2 = \left[ \begin{array}{c}
1 1 1 1 1 1 1 1 1 1 1 1 1 1 1 1 1 1 1 1 1 1 1 1 1 1 1 1 1 1 1 1\\
\hline
0 0 0 0 0 0 0 0 0 0 0 0 0 0 0 0 1 1 1 1 1 1 1 1 1 1 1 1 1 1 1 1\\
0 0 0 0 0 0 0 0 1 1 1 1 1 1 1 1 0 0 0 0 0 0 0 0 1 1 1 1 1 1 1 1\\
0 0 0 0 1 1 1 1 0 0 0 0 1 1 1 1 0 0 0 0 1 1 1 1 0 0 0 0 1 1 1 1\\
0 0 1 1 0 0 1 1 0 0 1 1 0 0 1 1 0 0 1 1 0 0 1 1 0 0 1 1 0 0 1 1\\
0 1 0 1 0 1 0 1 0 1 0 1 0 1 0 1 0 1 0 1 0 1 0 1 0 1 0 1 0 1 0 1\\
\hline
1 0 0 0 0 0 0 1 0 0 0 1 0 1 1 1 0 1 0 0 1 1 0 1 0 0 1 0 0 1 0 0\\
0 1 0 0 0 0 0 1 0 0 0 1 0 1 0 0 0 0 1 0 0 1 1 1 1 0 0 0 1 1 0 1\\
0 0 1 0 0 0 0 1 0 1 0 0 0 1 1 1 0 1 1 1 0 1 0 0 0 0 0 1 0 0 1 0\\
0 0 0 0 1 0 0 1 0 0 0 0 1 0 0 1 0 1 0 1 1 1 0 0 1 0 1 0 0 0 1 1\\
0 0 1 0 0 0 0 1 0 0 0 1 0 0 1 0 0 1 1 1 1 0 1 1 0 1 0 0 1 0 0 0\\
\end{array}
\right]
\]
}

 \begin{cor}
 For each extremal Type II length 32 code, there are maximum dimension subcodes with respect to $d=12, 16, 32$ that are involved in the subcode chain for the ODP in dictionary order or the inverse order.  Furthermore, each extremal Type II length 32 code contains dimension optimal (and minimum distance optimal) subcodes with parameters $[32,11,12], [32,6,16], [32,1,32]$.
  \end{cor}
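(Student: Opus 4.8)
The plan is to prove the stated value of $\mbox{ODP}[C]$ by showing it is simultaneously the coordinate-wise largest achievable minimum distance at every subcode dimension and that all of these maxima are realized by a single complete flag; once both facts are in hand, the profile is optimal in the dictionary and inverse dictionary orders at once. The starting observation is that, because $C$ is Type II, every subcode of $C$ is doubly-even, so every entry of any distance profile of $C$ lies in $\{8,12,16,20,24,28,32\}$ and is at least $d(C)=8$. For each $m$ with $1\le m\le 16$ let $M_m$ denote the maximum minimum distance among all $m$-dimensional subcodes of $C$; since any distance profile satisfies $d_{16-m}\le M_m$, the sequence $(M_{16},M_{15},\dots,M_1)$ dominates every distance profile coordinate-wise, and a coordinate-wise dominating profile that is itself realizable is automatically an upper bound in both the dictionary and the inverse dictionary orders.

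First I would pin down the values $M_m$. For $m\ge 12$ the non-existence of a $[32,12,12]$ code (equivalently, the dimension-optimality of the $[32,11,12]$ codes established by Maks--Simonis and Jaffe) forces $M_{16}=M_{15}=M_{14}=M_{13}=M_{12}=8$, since any $m$-dimensional subcode with $m\ge 12$ and minimum distance $\ge 12$ would contain a $[32,12,12]$ subcode. At dimension $11$, the containment $RC_i\subseteq C$ supplies a $[32,11,12]$ subcode while the Griesmer bound rules out $[32,11,16]$, so $M_{11}=12$. For $7\le m\le 10$ the Griesmer bound excludes $[32,m,16]$, and a suitable subcode of $RC_i$ attains $12$, giving $M_{10}=M_9=M_8=M_7=12$. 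At dimension $6$ the containment $R(1,5)\subseteq RC_i\subseteq C$ gives a $[32,6,16]$ subcode and Griesmer excludes $[32,6,20]$, so $M_6=16$; for $2\le m\le 5$ the subcodes of $R(1,5)$ realize $16$ while Griesmer excludes $[32,5,20]$, so $M_5=\dots=M_2=16$; finally $M_1=32$ via $\gen{\1}\subseteq R(1,5)$.

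Next I would exhibit one complete flag realizing all of these maxima at once, namely a chain refining
\[
C \;\supset\; RC_i \;\supset\; R(1,5) \;\supset\; \gen{\1}
\]
of dimensions $16,11,6,1$. The intermediate levels require no delicate choices: any refinement between $C$ and $RC_i$ consists of subcodes of dimension $\ge 12$, hence minimum distance $8$; any refinement between $RC_i$ and $R(1,5)$ consists of subcodes of $RC_i$ of dimension $\ge 7$, so minimum distance $\ge 12$ and $<16$, hence exactly $12$; and any refinement between $R(1,5)$ and $\gen{\1}$ chosen to contain $\1$ consists of subcodes of $R(1,5)$ of dimension $\ge 2$ containing a weight-$16$ word, hence minimum distance $16$. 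This flag therefore has distance profile $[8,8,8,8,8,12,12,12,12,12,16,16,16,16,16,32]$, matching $(M_{16},\dots,M_1)$. Since the profile both dominates every other profile coordinate-wise and is realizable, it is the ODP in both orders, and the argument is uniform over the five codes because $RC_1$ and $RC_2$ are subcodes of each of them. I would stress that the two orders coincide here, unlike for $g_{24}$, precisely because the optimal dimension-$6$ subcode $R(1,5)$ is self-complementary: it contains $\1$ and has minimum distance $16$ down to dimension $2$, so the bottom of the flag attains $M_2=16$ and $M_1=32$ simultaneously, removing the conflict that separates the two orders for $g_{24}$.

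The main obstacle is not the Griesmer estimates, which are routine, but the two structural inputs that pin the jump points and make the argument uniform. The first is the non-existence of a $[32,12,12]$ code, a classification-level fact (Maks--Simonis, Jaffe) without which the $8\to 12$ transition could not be fixed at dimension $11$; note that Griesmer alone does not exclude $[32,12,12]$. The second is the verified containment of both $[32,11,12]$ codes $RC_1,RC_2$ in each of the five extremal Type II codes, which is the computational heart carried out by (Supercodes) Chain Algorithm II; together with $R(1,5)\subseteq RC_i$ and $\1\in R(1,5)$ it is exactly what allows the coordinate-wise maxima to be threaded through a single flag for every one of the five codes.
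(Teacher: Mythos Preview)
Your argument actually reproves the preceding Theorem (the value of $\mbox{ODP}[C]$) and then reads off the Corollary; the paper itself gives no separate proof of the Corollary, treating it as immediate from the Theorem, and its proof of the Theorem leans on Chen--Han Vinck's prior determination of $\mbox{ODP}[r_{32}]$ together with the computational verification that $RC_1,RC_2$ embed in all five codes. Your self-contained route via the coordinatewise maxima $M_m$ is a pleasant alternative and, for $m\ge 3$ and $m=1$, everything you say is correct.

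There is, however, a real gap at dimension $2$. You write ``Griesmer excludes $[32,5,20]$, so $M_5=\cdots=M_2=16$,'' but at $m=2$ the Griesmer bound gives only $20+10=30\le 32$ and therefore does \emph{not} rule out a doubly-even $[32,2,20]$ subcode (it does rule out $[32,m,20]$ for $3\le m\le 5$, which you should state separately rather than sliding from $m=5$). If one of the five codes contains a $[32,2,20]$ subcode --- and nothing in your argument excludes this --- then the coordinatewise-maximum sequence $(M_{16},\dots,M_1)$ is \emph{not} realizable by any single flag, since a $2$-dimensional subcode containing $\mathbf{1}$ necessarily has a word of weight at most $16$; your ``dominates every profile and is itself realizable, hence ODP in both orders'' step then collapses. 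The repair is easy and does not require knowing $M_2$: once $\mbox{ODP}^{dic}[C]_6=16$ is established, the $6$-dimensional term in any dictionary-optimal chain is a $[32,6,16]$ code, hence equivalent to $R(1,5)$, whose only nonzero weights are $16$ and $32$; thus every $C_m$ with $2\le m\le 5$ in such a chain is forced to have $d(C_m)=16$, and one may choose $C_1=\langle\mathbf{1}\rangle$ to get $\mbox{ODP}^{dic}[C]_1=32$. For the inverse order the constraint $C_2\supset\langle\mathbf{1}\rangle$ already forces $d(C_2)\le 16$. With this adjustment your argument goes through and the Corollary follows exactly as you intend.
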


\subsection{$n=48$}\label{sec_n-48}


  The extended QR code $q_{48}$ is a unique $[48,24,12]$ self-dual code. Using Random (Subcodes) Algorithm I, we find that
for $d'=16$, there is a maximal $[48,14,16]$
subcode of $q_{48}$. The best known minimum distance optimal $[48,14]$ code has
$d=16$. (Note that $17$ is the upper bound.) One code is given in Magma.
 We have checked that our code is not equivalent to this code. Similarly, for $d'=20$, there is a maximal $[48,9,20]$
subcode of $q_{48}$. This is minimum distance optimal.
One $[48,9,20]$ code is
given in Magma. We have checked that our $[48,9,20]$ code is not
equivalent to this code.
For $d'=24$, there is a maximal $[48,6,24]$
subcode of $q_{48}$, which is in fact a unique code by~\cite{Til}. This is minimum distance optimal. One code is given in
Magma. We have checked that our code is equivalent to
this code.


With respect to the inverse dictionary order we have examined some self-complementary subcodes of $q_{48}$.  There is a $[48,5,24]$ self-complementary subcode (note that $k=5$ is the maximum dimension of a $[48,k,24]$ self-complementary subcode since the unique $[48,6,24]$ code does not contain the all-one vector).  There is a maximal $[48,9,20]$ self-complementary subcode containing the $[48,5,24]$ code (note that $k=10$ is the maximum dimension of a $[48,k,20]$ self-complementary subcode).

\begin{lem}\label{MacId}{\rm(\cite[the MacWilliams Identities, p. 129]{MacSlo})} Let $C$ be an $[n,k]$ code and denote $A_w$ and $A_w^{\perp}$ to be the number of codewords of weight $w$ in the code $C$ and $C^{\perp}$ respectively. Then
\[
\sum_{i=0}^n A_i P_w(n,i) = 2^k A_w^{\perp}, ~~{\mbox{for}}~~ 0 \le w \le n,
\]
where $P_w(n,i)=\sum_{j=0}^w (-1)^j \left( \begin{array}{c} i \\ j \end{array} \right)
\left( \begin{array}{c} n-i \\ w-j \end{array} \right)$ is a Krawtchouk polynomial.
\end{lem}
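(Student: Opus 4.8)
The plan is to prove the identity by evaluating a single character sum in two different ways, which is the classical approach to the MacWilliams relations. Throughout I work over $\F_2$ and use the additive character $c \mapsto (-1)^{u \cdot c}$, where $u \cdot c$ is the Euclidean inner product. The one structural fact I need is the orthogonality relation: for a fixed $u \in \F_2^n$,
\[
\sum_{c \in C} (-1)^{u \cdot c} =
\begin{cases}
2^k & \text{if } u \in C^\perp, \\
0 & \text{otherwise,}
\end{cases}
\]
which holds because $c \mapsto (-1)^{u \cdot c}$ is a homomorphism from the additive group $C$ to $\{\pm 1\}$, trivial exactly when $u \in C^\perp$ (so the sum equals $|C| = 2^k$) and nontrivial otherwise (so the sum is $0$).

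First I would introduce the double sum
\[
S = \sum_{\wt(u) = w} \; \sum_{c \in C} (-1)^{u \cdot c},
\]
with $u$ ranging over the weight-$w$ vectors of $\F_2^n$. Applying the orthogonality relation to the inner sum, only the $u \in C^\perp$ survive, each contributing $2^k$, so $S = 2^k A_w^\perp$, which is the right-hand side of the claimed identity.

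Next I would evaluate $S$ by interchanging the summations, $S = \sum_{c \in C} \sum_{\wt(u) = w} (-1)^{u \cdot c}$, and observe that the inner sum depends on $c$ only through $i = \wt(c)$. For a fixed $c$ with $|\supp(c)| = i$, a weight-$w$ vector $u$ that meets $\supp(c)$ in $j$ coordinates and its complement in $w - j$ coordinates satisfies $u \cdot c \equiv j \pmod 2$, and there are $\binom{i}{j}\binom{n-i}{w-j}$ such $u$. Summing over $j$ reproduces the Krawtchouk polynomial,
\[
\sum_{\wt(u) = w} (-1)^{u \cdot c} = \sum_{j=0}^{w} (-1)^j \binom{i}{j}\binom{n-i}{w-j} = P_w(n,i),
\]
so grouping codewords of $C$ by weight gives $S = \sum_{i=0}^n A_i P_w(n,i)$.

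Equating the two expressions for $S$ yields $\sum_{i=0}^n A_i P_w(n,i) = 2^k A_w^\perp$, as required. The main obstacle is the middle step: correctly stratifying the weight-$w$ vectors $u$ by their overlap $j$ with $\supp(c)$ and verifying the parity $u \cdot c \equiv j \pmod 2$, which is exactly what turns the character sum into the Krawtchouk polynomial. The orthogonality relation, the other ingredient, is a routine fact about characters of finite abelian groups.
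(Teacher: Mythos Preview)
Your proof is correct and is the standard character-sum derivation of the MacWilliams identities. However, the paper does not prove this lemma at all: it is stated as a citation to \cite{MacSlo} and used as a black box, so there is no ``paper's own proof'' to compare against. Your argument supplies exactly what the reference would, with the key combinatorial step (stratifying weight-$w$ vectors by their overlap with $\supp(c)$ to recover the Krawtchouk polynomial) carried out cleanly.
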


Let $C$ be an $[n,k,d]$ code over $\mathbb F_q$. Let $T$ be a set of
$t$ coordinates. Let $C(T)$ be the set of codewords of $C$ which are
{\bf 0} on $T$. We puncture $C(T)$ on $T$ to get a linear code of
length $n-t$ called the {\em code shortened on T} and denoted by
$C_T$~\cite{HufPle}.

\begin{lem}{\rm(\cite[Theorem 1.5.7 (iii)]{HufPle})} \label{thm_shorten} Let $C$ be an $[n,k,d]$ code over
$\mathbb F_q$. Let $T$ be a set of $t$ coordinates. If $t=d$ and $T$
is the set of coordinates where a minimum weight codeword is
non-zero, then $(C^{\perp})_T$ has dimension $n-d-k+1$.
\end{lem}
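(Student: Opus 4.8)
The plan is to reduce the computation of $\dim (C^\perp)_T$ to a single dimension count for the restriction map $C^\perp \to \mathbb{F}_q^T$, $w \mapsto w|_T$, and then to extract everything from the hypothesis that $T$ is the support of a minimum weight codeword. First I would unwind the definition of shortening given just above the lemma: $(C^\perp)_T$ is obtained by puncturing on $T$ the subcode $\{w \in C^\perp : w|_T = \mathbf{0}\}$ of codewords of $C^\perp$ that vanish on $T$. Since such codewords are already zero on every coordinate of $T$, deleting those coordinates loses no information, so puncturing is injective on this subcode and
\[
\dim (C^\perp)_T = \dim \{w \in C^\perp : \supp(w) \cap T = \emptyset\}.
\]
This observation uses only the definitions already in place and turns the problem into computing the dimension of a kernel.

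Next I would introduce the restriction map $\rho : C^\perp \to \mathbb{F}_q^T$, $\rho(w) = w|_T$, whose kernel is exactly the subcode on the right-hand side above; rank--nullity together with $\dim C^\perp = n-k$ then gives $\dim (C^\perp)_T = (n-k) - \dim(\operatorname{im}\rho)$, so everything comes down to showing $\dim(\operatorname{im}\rho) = d-1$. To identify the image I would compute its orthogonal complement inside $\mathbb{F}_q^T$: for $u \in \mathbb{F}_q^T$, writing $\hat u \in \mathbb{F}_q^n$ for the extension of $u$ by zeros off $T$, one has $\langle u, w|_T\rangle = \langle \hat u, w\rangle$ for every $w \in C^\perp$, so $u \perp \operatorname{im}\rho$ precisely when $\hat u \in (C^\perp)^\perp = C$. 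Thus $(\operatorname{im}\rho)^\perp$ is identified with $V = \{c \in C : \supp(c)\subseteq T\}$, and since $\operatorname{im}\rho \subseteq \mathbb{F}_q^T$ has ambient dimension $|T|=d$, we get $\dim(\operatorname{im}\rho) = d - \dim V$.

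The crux, and the step I expect to be the main obstacle, is to prove $\dim V = 1$; this is exactly where the hypotheses $t = d$ and ``$T$ is the support of a minimum weight codeword $c_0$'' are needed. Since $|T| = d$ equals the minimum distance, any nonzero $c \in V$ satisfies $0 < \wt(c) \le |T| = d$, forcing $\supp(c) = T$, so every nonzero element of $V$ has full support $T$. Given nonzero $c_1, c_2 \in V$ and a coordinate $i \in T$ with entries $a = (c_1)_i \neq 0$ and $b = (c_2)_i \neq 0$, the codeword $b\,c_1 - a\,c_2 \in V$ vanishes at $i$, hence has weight strictly less than $d$ and must be $\mathbf{0}$ by minimality; so $c_1$ and $c_2$ are scalar multiples and $V = \gen{c_0}$ is one-dimensional. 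Combining the pieces yields $\dim(\operatorname{im}\rho) = d-1$ and
\[
\dim (C^\perp)_T = (n-k) - (d-1) = n - d - k + 1,
\]
as claimed. As an alternative to the middle paragraph one could invoke the standard duality $(C^\perp)_T = (C^T)^\perp$ between shortening and puncturing and the fact that puncturing $C$ on $T$ drops the dimension by exactly $\dim V$, but the one-dimensionality of $V$ remains the essential point either way.
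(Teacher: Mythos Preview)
Your argument is correct. The paper itself does not prove this lemma; it is quoted verbatim as \cite[Theorem 1.5.7 (iii)]{HufPle} and used as a black box in Section~\ref{sec_n-48}, so there is no in-paper proof to compare against. Your proof is self-contained and sound: the identification of $(\operatorname{im}\rho)^\perp$ with $V=\{c\in C:\supp(c)\subseteq T\}$ via zero-extension is exactly the duality between shortening and puncturing (your alternative remark $(C^\perp)_T=(C^T)^\perp$), and the key step $\dim V=1$ is argued cleanly from the minimum-distance hypothesis.
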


Both Lemma~\ref{MacId} and Lemma~\ref{thm_shorten} are useful in determining the non-existence of codes with particular parameters and restricted weight distributions.  These lemmas are invoked to prove the non-existence of particular subcodes of the extended quadratic residue code: $q_{48}$.  Lemma~\ref{MacId} is also applied to determine the possible weight distribution of a putative subcode.

 In what follows, we classify all possible weight distributions of a supposed $[48,10,20]$ self-complementary subcode of $q_{48}$.

\begin{lem}
If $C$ is a self-complementary [48,10,20] subcode of $q_{48}$, then the non-zero codewords of $C$ have weights {20,24,28,48}.
\end{lem}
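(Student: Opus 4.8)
The plan is to combine three elementary structural constraints on $C$: the double-evenness it inherits as a subcode of $q_{48}$, its minimum distance, and the symmetry of its weight distribution forced by self-complementarity. No heavy machinery is needed for this particular claim; the MacWilliams identities (Lemma~\ref{MacId}) and the shortening result (Lemma~\ref{thm_shorten}) are reserved for the next stage, when the exact multiplicities are determined.

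First I would record that $C$ is doubly-even. Since $q_{48}$ is an extremal Type II (doubly-even) self-dual code, every codeword of $q_{48}$ has weight divisible by $4$, and hence so does every codeword of the subcode $C$. Together with $d(C)=20$, this confines every non-zero weight $w$ of $C$ to the set $\{20,24,28,32,36,40,44,48\}$. Next I would invoke self-complementarity. For a \emph{linear} code, being self-complementary is equivalent to containing the all-one vector $\1$: taking the complement of $\0$ shows $\1\in C$, and conversely $\1\in C$ makes $c\mapsto \1+c$ a weight-complementing involution of $C$. Writing $A_w$ for the number of weight-$w$ codewords, this involution sends weight $w$ to weight $48-w$, so that $A_w=A_{48-w}$ for all $w$; in particular the weight distribution of $C$ is symmetric about $24$.

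Finally I would eliminate the four middle weights. For each $w\in\{32,36,40,44\}$ the complementary weight $48-w\in\{16,12,8,4\}$ is strictly below $d(C)=20$, so $A_{48-w}=0$ and therefore $A_w=A_{48-w}=0$. Thus no codeword of $C$ has weight $32$, $36$, $40$, or $44$, and the non-zero weights of $C$ are confined to $\{20,24,28,48\}$. (Note that weights $20$ and $48$ do occur, the former because $d(C)=20$ and the latter because $\1\in C$, while $24$ and $28$ may or may not appear.) The argument is short, and the only point deserving care is the equivalence between self-complementarity and membership of $\1$, which I would spell out explicitly since the symmetry $A_w=A_{48-w}$ is the engine of the whole deduction.
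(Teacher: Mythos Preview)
Your argument is correct as far as it goes, but it proves a weaker statement than the one the paper intends. The lemma is asserting that the non-zero weight set of $C$ is \emph{exactly} $\{20,24,28,48\}$, not merely contained in it. The paper takes the containment for granted (via the double-evenness and self-complementarity constraints you spell out) and devotes its proof to the only nontrivial point: ruling out $A_{24}=0$. If $A_{24}=0$, then self-complementarity forces $A_{20}=A_{28}=2^9-1$, and the MacWilliams identity at $w=2$ gives $2256+16A_{20}=2^{10}A_2^\perp$, whence $A_2^\perp=163/16$, a contradiction. This is the substance of the lemma, and your write-up omits it entirely; indeed you explicitly claim that weight $24$ ``may or may not appear,'' which is precisely what the paper shows cannot happen.

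There is also a small slip in your final parenthetical: you say weight $28$ may or may not appear, but your own symmetry $A_w=A_{48-w}$ gives $A_{28}=A_{20}>0$, so $28$ is forced. That leaves $24$ as the only weight whose presence is in doubt, and settling it requires the MacWilliams computation you deferred.
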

\begin{proof}
Suppose to the contrary that $C$ has non-zero weights {20,28,48}.  Then clearly $A_{20} := 2^9 -1$.  Using the MacWilliams Identities (Lemma~\ref{MacId}) we obtain the equation $2256+16A_{20}=2^{10} A_2^{\perp}$.  Hence $A_2^{\perp} = \frac{163}{16}$, a contradiction.
\end{proof}

\begin{lem}
If $C$ is a self-complementary [48,10,20] subcode of $q_{48}$, then $d^\perp (C) \neq 2$.
\end{lem}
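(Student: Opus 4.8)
The plan is to argue by contradiction: suppose $d^\perp(C) = 2$, so that $C^\perp$ contains a weight-$2$ codeword $v = e_i + e_j$; equivalently, coordinates $i$ and $j$ agree on every codeword of $C$. Before exploiting this I would record two structural facts. First, since $C \subseteq q_{48} = q_{48}^\perp \subseteq C^\perp$, the code $C$ is self-orthogonal and, being a subcode of the Type II code $q_{48}$, doubly-even. Second, by the preceding lemma the nonzero weights of $C$ are $20,24,28,48$, and self-complementarity forces $A_{20}=A_{28}=:a$, $A_{48}=1$, and $A_{24}=2^{10}-2-2a=1022-2a$.

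As a first, ``soft'' consequence I would feed this weight enumerator into the MacWilliams identities (Lemma~\ref{MacId}) at $w=2$. Using $P_2(48,0)=P_2(48,48)=1128$, $P_2(48,20)=P_2(48,28)=8$ and $P_2(48,24)=-24$, one gets $2^{10}A_2^\perp = 2256 + 16a - 24(1022-2a) = 64a - 22272$, i.e. $A_2^\perp = (a-348)/16$. Thus a nonzero $A_2^\perp$ would already force $A_{20}=a \ge 364$ with $a \equiv 12 \pmod{16}$; the real task is to exclude this.

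The decisive step is a shortening argument. Regard $C^\perp$ as a $[48,38,2]$ code and take $T=\{i,j\}$, the support of the weight-$2$ codeword $v$; since $|T| = d(C^\perp) = 2$, Lemma~\ref{thm_shorten} applies and gives that $C_T = (C^{\perp\perp})_T$ has dimension $48-2-38+1 = 9$, so $C_T$ is a $[46,9]$ code. By definition $C_T$ is the set of codewords of $C$ vanishing on $T$, punctured on $T$; as the only weight-$48$ word $\mathbf{1}$ does not vanish on $T$, the nonzero weights of $C_T$ lie in $\{20,24,28\}$, and $C_T$ is again doubly-even and self-orthogonal. If $C_T$ has no weight-$20$ word it is a $[46,9,\ge 24]$ code, which is impossible by the Griesmer bound (an $[n,9,24]$ code requires $n \ge 51$). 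Hence $C_T$ must be a $[46,9,20]$ code with nonzero weights confined to $\{20,24,28\}$, and it remains to show that no such code exists.

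The main obstacle is precisely this last non-existence. I would attack it exactly in the spirit of the preceding lemma and Lemma~\ref{MacId}: write $A_{20}(C_T)=p$, $A_{24}(C_T)=q$, $A_{28}(C_T)=r$ with $p+q+r=511$, compute the dual weights $A_w^\perp(C_T)$ from the MacWilliams identities, and show that the resulting linear system admits no solution in which every $A_w^\perp(C_T)$ is a non-negative integer. The low-order constraints already trim the region (for instance $A_2^\perp(C_T) = (p+3r-606)/32 \ge 0$), and I expect the full family of inequalities, combined with $C_T \subseteq C_T^\perp$, to be infeasible. I anticipate that closing this linear program cleanly by hand is the delicate part, so in practice I would also confirm the non-existence of a $[46,9,20]$ code with this restricted weight set directly (against Grassl's tables or by a short Magma search), after which the contradiction is complete and $d^\perp(C) \ne 2$ follows.
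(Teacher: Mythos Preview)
Your setup through the shortening step is correct and matches the paper exactly: from a weight-$2$ word in $C^\perp$ you pass, via Lemma~\ref{thm_shorten}, to a doubly-even $[46,9]$ code $C_T$ with nonzero weights in $\{20,24,28\}$, and your Griesmer argument ruling out $d(C_T)\ge 24$ is a nice detail the paper leaves implicit.

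The gap is in the final step. Your hope that the MacWilliams inequalities (even together with $C_T\subseteq C_T^\perp$) cut out an empty region is not borne out: the system \emph{does} admit an integral non-negative solution. In fact the paper pins it down uniquely, obtaining $A_{20}=243$, $A_{24}=147$, $A_{28}=121$, with $A_1^\perp=A_2^\perp=0$ and $A_3^\perp>0$. So the linear program you propose is feasible, and the contradiction cannot be extracted from MacWilliams alone. Your fallback of ``checking Grassl's tables'' for $[46,9,20]$ also does not help, since $[46,9,20]$ codes do exist; what must be excluded is the restricted weight set, and that is not in the tables.

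What the paper actually does at this point is bring in two external non-existence facts from \cite{codetables}: there is no $[45,9,20]$ code and no $[44,8,20]$ code, which via shortening force $A_1^\perp(C_T)=A_2^\perp(C_T)=0$. Together with $A_0^\perp=1$ and the first four MacWilliams relations this yields the unique weight distribution above, and in particular $d(C_T^\perp)=3$. One then shortens $C_T$ a second time, now on a weight-$3$ word of $C_T^\perp$, to obtain a $[43,7,20]$ code with nonzero weights in $\{20,24,28\}$; this contradicts the Bouyuklieva--Jaffe classification of $[43,7,20]$ codes \cite{BouJaffe}. That external classification result is the missing ingredient your proposal needs.
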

\begin{proof}
Suppose to the contrary that $d^\perp (C) = 2$. Shortening $C$ on a
minimum weight codeword $x_2$ of $C^\perp$ yields a [46,9,20] code
$C_{46}$ with possible non-zero weights {20,24,28} by
Lemma~\ref{thm_shorten} (here we switched the role of $C$ and
$C^{\perp}$).

Define the following matrices:
\[
 \begin{array}{llllll}
 B &=& [A^\perp_{0}(C_{46}) & A^\perp_{1}(C_{46}) & A^\perp_{2}(C_{46}) & A^\perp_{3}(C_{46})]^T  , \\
 A &=& [A_{0}(C_{46}) & A_{20}(C_{46}) & A_{24}(C_{46}) & A_{28}(C_{46})]^T.
\end{array}
\]
Then the MacWilliams Identities yield the matrix equation $2^9 B = PA$, where

 \[P = \left[ \begin{array}{rrrr}
1 & 1 &1 & 1 \\
46 &6  &-2  &-10\\
1035 &-5  &-21  &27 \\
15180 &-100  &44  &60\\
\end{array}
\right].
\]
By Grassl's table~\cite{codetables} there (respectively) does not exist a [45,9,20] linear code and there does not exist a [44,8,20] linear code, therefore respectively we have $A^\perp_{1}(C_{46})=0$ and $A^\perp_{2}(C_{46})=0$.  Combined with the fact that $A^\perp_{0}(C_{46})=1$ the above matrix equation yields a unique solution of:
\begin{equation}
 \begin{array}{llllll}\label{C46wei}
 A &=& [1 & 243 & 147 & 121]^T.
\end{array}
\end{equation}
The possible weight distribution of $C_{46}$ and $C_{46}^\perp$
follows from (\ref{C46wei}).  In particular,  $d(C_{46}^\perp)=3$
which by shortening $C_{46}$ on a minimum weight codeword of
$C_{46}^\perp$ using Lemma~\ref{thm_shorten} implies the existence of a
$[43,7,20]$ code with non-zero weights {20,24,28}. This is a
contradiction to the classification of [43,7,20] due to Bouyuklieva
and Jaffe~\cite{BouJaffe}.
\end{proof}

\begin{lem}
If $C$ is a self-complementary [48,10,20] subcode of $q_{48}$, then there is one possible weight distribution of $C$:
\[
 \begin{array}{lllll}
A_0 = 1& A_{20} = 348& A_{24}=326& A_{28}=348& A_{48} =1.
\end{array}
\]
\end{lem}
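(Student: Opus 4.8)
The plan is to pin down the five unknowns $A_0,A_{20},A_{24},A_{28},A_{48}$ by assembling just enough linear constraints, using the two preceding lemmas, which together supply both the admissible weight set $\{20,24,28,48\}$ and the crucial dual-distance information. First I would exploit self-complementarity directly. Since $C$ is self-complementary it contains the all-one vector $\mathbf 1$; as the only weight-$48$ vector, this forces $A_{48}=1$ (and trivially $A_0=1$). Moreover the bijection $x\mapsto x+\mathbf 1$ of $C$ carries weight-$w$ codewords to weight-$(48-w)$ codewords, so $A_w=A_{48-w}$; in particular $A_{20}=A_{28}$. Counting all $2^{10}$ codewords then yields the first relation
\[
A_{20}+A_{24}+A_{28}=2^{10}-2,\qquad\text{i.e.}\qquad 2A_{20}+A_{24}=1022.
\]

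For the second relation I would turn to the dual weight enumerator. Because $\mathbf 1\in C$, no coordinate is identically zero on $C$, so $A_1^\perp=0$ and hence $d^\perp(C)\ge 2$; combined with the previous lemma, which asserts $d^\perp(C)\neq 2$, this gives $d^\perp(C)\ge 3$ and therefore $A_2^\perp=0$. Applying the MacWilliams Identities (Lemma~\ref{MacId}) at $w=2$ then forces $\sum_i A_i\,P_2(48,i)=2^{10}A_2^\perp=0$. Evaluating the Krawtchouk polynomial at the four relevant weights (a short computation giving $P_2(48,0)=P_2(48,48)=1128$, $P_2(48,20)=P_2(48,28)=8$, and $P_2(48,24)=-24$) reduces this, after using $A_{20}=A_{28}$, to
\[
2256+16A_{20}-24A_{24}=0.
\]

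Solving the resulting $2\times 2$ system $\{\,2A_{20}+A_{24}=1022,\ 16A_{20}-24A_{24}=-2256\,\}$ produces the unique solution $A_{20}=A_{28}=348$ and $A_{24}=326$, which is exactly the asserted distribution. The only genuine labor is the evaluation of $P_2(48,i)$, which is entirely routine; the conceptual content is that the self-complementary symmetry ($A_{20}=A_{28}$) together with the single dual constraint $A_2^\perp=0$ inherited from the preceding lemma already makes the system determined. The one point to watch is the passage from the previous lemma's conclusion $d^\perp(C)\neq 2$ to the stronger statement $A_2^\perp=0$: this is precisely where the observation $A_1^\perp=0$ (forced by $\mathbf 1\in C$) is needed, since without it $d^\perp(C)\neq 2$ would leave open the possibility of weight-$2$ dual words alongside a weight-$1$ one.
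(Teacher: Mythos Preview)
Your argument is correct and is essentially the same as the paper's: both use the self-complementary symmetry to reduce to the unknowns $A_{20},A_{24}$ and then impose the MacWilliams relation at $w=2$ (your equation $2256+16A_{20}-24A_{24}=0$ is exactly the second row of the paper's matrix identity $2^{10}B=PA$, and your count $2A_{20}+A_{24}=1022$ is its first row). If anything, your write-up is slightly more careful: the paper invokes ``by the previous lemma $A_2^\perp=0$,'' whereas that lemma only gives $d^\perp(C)\neq 2$; you correctly supply the missing observation $A_1^\perp=0$ (since $\mathbf 1\in C$) to upgrade this to $d^\perp(C)\ge 3$ and hence $A_2^\perp=0$.
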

\begin{proof}
Define the following matrices:
\[
 \begin{array}{lllll}
 B &=& [A^\perp_{0} & A^\perp_{2} & A^\perp_{4}]^T  , \\
 A &=& [A_{0} & A_{20} & A_{24}]^T.
\end{array}
\]
Then the MacWilliams Identities along with the fact that $C$ is self-complementary yield the matrix equation $2^{10} B = PA$, where

 \[P = \left[ \begin{array}{rrrr}
2 &2 & 1 \\
2256  &16  &-24\\
389160  &-600  &276 \\
\end{array}
\right].
\]
By the previous lemma $A^\perp_{2}=0$, combined with the fact that $A_{0}=A^\perp_{0}=1$ the above matrix equation yields a unique solution of:
\[
 \begin{array}{lllll}
 A &=& [1 & 348 & 326]^T.
\end{array}
\]
\end{proof}

\begin{lem}\label{selfcomp17}
There does not exist a self-complementary [48,k,16] subcode $C$ of $q_{48}$ for $k \ge 17$.
\end{lem}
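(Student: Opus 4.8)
The plan is to reduce immediately to the case $k=17$. Suppose a self-complementary $[48,k,16]$ subcode $C$ of $q_{48}$ exists with $k \ge 17$. Since $d(C)=16$ there is a codeword $x$ of weight $16$, and since $C$ is self-complementary we have $\mathbf 1 \in C$; the vectors $x$ and $\mathbf 1$ are independent (they have different weights). I would then choose any $17$-dimensional subspace $C'$ with $\langle \mathbf 1, x\rangle \subseteq C' \subseteq C$. As a subcode of $C$ it satisfies $d(C') \ge d(C)=16$, while $d(C') \le \mathrm{wt}(x)=16$, so $C'$ is a self-complementary $[48,17,16]$ subcode of $q_{48}$, and it is doubly-even because it sits inside the Type II code $q_{48}$. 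Hence it suffices to rule out $k=17$.

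So assume now $C$ is a self-complementary $[48,17,16]$ subcode. Being doubly-even and self-complementary, its weights lie in $\{0,16,20,24,28,32,48\}$ with $A_w=A_{48-w}$, so its enumerator has only three unknowns $A_{16},A_{20},A_{24}$, subject to $2+2A_{16}+2A_{20}+A_{24}=2^{17}$. Because $\mathbf 1 \in C$, every word of $C^{\perp}$ is even, giving $A^{\perp}_w=0$ for odd $w$; and since $C \subseteq q_{48}=q_{48}^{\perp}$ we have $q_{48}\subseteq C^{\perp}$, whence $A^{\perp}_w \ge A_w(q_{48})$ for all $w$, using the known extremal weight distribution of $q_{48}$. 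Applying the MacWilliams Identities (Lemma~\ref{MacId}) for small even $w$ yields linear relations and inequalities in $A_{16},A_{20},A_{24}$. I expect these alone to leave a nonempty feasible region (the odd-$w$ identities are automatically satisfied by the complementation symmetry), so the MacWilliams/linear reasoning must be supplemented by a shortening argument, exactly as in the preceding analysis of $[48,10,20]$ self-complementary subcodes.

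The key additional tool is to shorten $C$ on the support $T$ of a minimum-weight codeword $x$, so $|T|=16$. The shortened code $C_T$ has length $32$, is doubly-even with $d(C_T)\ge 16$, and is again self-complementary: the word $\mathbf 1+x$ has support exactly the complement of $T$, so it lies in the shortened subcode and restricts to the all-ones vector on those $32$ coordinates. Since no binary $[32,7,16]$ code exists (the Griesmer bound gives $\sum_{i=0}^{6}\lceil 16/2^{i}\rceil = 33 > 32$), we get $\dim C_T \le 6$, i.e.\ the projection of $C$ onto $T$ has rank at least $11$ for every weight-$16$ codeword. This strong local constraint, combined with the dual-distance information, is what I would push to a contradiction: analyzing the possible values of $d^{\perp}(C)$ and shortening $C$ on the support of a minimum-weight codeword of $C^{\perp}$ via Lemma~\ref{thm_shorten} (switching the roles of $C$ and $C^{\perp}$ as before) produces codes with parameters $[48-d^{\perp},\,18-d^{\perp},\,\ge 16]$, whose existence can be tested against Grassl's table~\cite{codetables} and the relevant classifications such as~\cite{BouJaffe}. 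The hard part will be exactly this last step: forcing $d^{\perp}(C)$ into a small value and certifying that every resulting shortened parameter set is forbidden. This is the same mechanism, and the same difficulty, as in the $[48,10,20]$ lemmas, where the non-existence of $[45,9,20]$, $[44,8,20]$ and $[43,7,20]$ codes supplied the contradictions; here one instead leans on the rigidity of length-$32$ doubly-even self-complementary codes with $d\ge 16$, for which $R(1,5)$ is essentially the only building block, to close the argument.
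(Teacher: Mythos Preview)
Your reduction to $k=17$ is fine, and so is the setup: a doubly-even self-complementary subcode has weights in $\{0,16,20,24,28,32,48\}$ with $A_w=A_{48-w}$, so there are only three free unknowns $A_{16},A_{20},A_{24}$. But your central strategic claim --- that the MacWilliams identities ``leave a nonempty feasible region'' and must therefore be supplemented by a shortening argument --- is simply wrong, and it is exactly here that the paper's proof diverges from yours.

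The paper writes down \emph{four} MacWilliams relations, for $A^{\perp}_0,A^{\perp}_2,A^{\perp}_4,A^{\perp}_6$, against the four-vector $(A_0,A_{16},A_{20},A_{24})$. Inverting the resulting $4\times 4$ system and reading off the first row gives
\[
1 \;=\; A_0 \;=\; \tfrac{17}{14}\,A^{\perp}_0 + \tfrac{65}{224}\,A^{\perp}_2 + \tfrac{3}{56}\,A^{\perp}_4 + \tfrac{1}{224}\,A^{\perp}_6 ,
\]
so with $A^{\perp}_0=1$ one obtains
\[
\tfrac{65}{224}\,A^{\perp}_2 + \tfrac{3}{56}\,A^{\perp}_4 + \tfrac{1}{224}\,A^{\perp}_6 \;=\; -\tfrac{3}{14},
\]
impossible since every $A^{\perp}_i\ge 0$. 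No shortening, no appeal to Grassl's tables or to~\cite{BouJaffe}, not even your lower bound $A^{\perp}_w\ge A_w(q_{48})$ is needed; bare nonnegativity of the dual distribution already yields the contradiction. You stopped one equation short: had you also written down the $w=6$ identity, you would have found the system overdetermined and inconsistent.

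As it stands, your alternative shortening program never reaches a contradiction --- you explicitly flag ``the hard part will be exactly this last step'' --- so the proposal is not a proof. The observations about $C_T$ and the Griesmer bound for $[32,7,16]$ are correct but superfluous.
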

\begin{proof}
Suppose a [48,17,16] self-complementary subcode $C$ exists.  The possible non-zero weights of $C$ are {16,20,24,28,32,48}.
Define the following matrices:
\[
 \begin{array}{llllll}
B &=& [A^\perp_{0} & A^\perp_{2} & A^\perp_{4} & A^\perp_{6}]^T  , \\
 A &=& [A_{0} & A_{16} & A_{20} & A_{24}]^T.
\end{array}
\]
Then the MacWilliams Identities along with the fact that $C$ is self-complementary yield the matrix equation $2^{17} B = PA$, where

 \[P = \left[ \begin{array}{rrrr}
      2&        2&        2&        1 \\
    2256&      208&       16&      -24 \\
  389160&       40&     -600&      276 \\
24543024&   -14544&     5616&    -2024 \\
\end{array}
\right].
\]
Isolating the matrix $A$ yields the matrix equation $2^{17} P^{-1} B = A$ where
 \[2^{17} P^{-1} = \left[ \begin{array}{rrrr}
   17/14&   65/224&     3/56&    1/224 \\
  9729/2& 17457/32&    211/8&   -15/32 \\
207552/7&  -1012/7&   -752/7&     12/7 \\
   62040&  -1605/2&      162&     -5/2 \\
\end{array}
\right].
\]
The first row of $2^{17} P^{-1}$ implies
\[
\frac{65}{224} A^\perp_{2} + \frac{3}{56} A^\perp_{4} + \frac{1}{224} A^\perp_{6} = -\frac{3}{14},
\]
which is impossible as $A^\perp_{i} \ge 0$ for all $i$.  Hence no such code $C$ can exist.
\end{proof}

The previous lemmas and example from this section yield the following theorem towards the inverse dictionary order ODP for $q_{48}$.
\begin{thm}
\[{\mbox{ODP}}^{inv}[q_{48}]=[12,12,12,12,12,12,12,12,a_1,a_2,a_3,a_4,a_5,a_6,b,20,20,20,20,24,24,24,24,48]
\]
where $a_i \in \{12,16\}$ and $b \in \{12,16,20\}$.
\end{thm}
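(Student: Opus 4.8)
The plan is to determine the profile by working from its right-hand (inverse-order) end, where the smallest subcodes in the chain $C_{23}\subset\cdots\subset C_0=q_{48}$ are fixed first. Since $q_{48}$ is self-complementary, its unique codeword of the maximal weight $48$ is the all-one vector $\mathbf 1$, so the dimension-$1$ subcode realizing the last profile entry must be $\langle\mathbf 1\rangle$, giving the entry $48$. The key structural point is that any subcode containing $\mathbf 1$ is itself self-complementary, so once $\langle\mathbf 1\rangle$ is fixed \emph{every} code in the optimal chain is self-complementary, and each is doubly-even with minimum distance a multiple of $4$ and at least $d(q_{48})=12$.

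The two tools I would use to control these self-complementary subcodes are, first, a complement bound: a self-complementary subcode of dimension $\ge 2$ contains some $x\notin\langle\mathbf 1\rangle$ together with $x+\mathbf 1$, and as $\wt(x)+\wt(x+\mathbf 1)=48$ its minimum distance is at most $24$. Second, the maximum-dimension facts already recorded in this section: a self-complementary subcode of minimum distance $24$ has dimension at most $5$ (the $[48,5,24]$ self-complementary code exists, while the unique $[48,6,24]$ code omits $\mathbf 1$), one of minimum distance $20$ has dimension at most $10$, and by Lemma~\ref{selfcomp17} one of minimum distance $16$ has dimension at most $16$.

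With these the profile fills in greedily. At dimensions $2$ through $5$ the complement bound caps the value at $24$, which is attained along a flag inside the $[48,5,24]$ self-complementary code (whose nonzero weights are only $24$ and $48$), so these four entries equal $24$. At dimensions $6$ through $9$ the value $24$ is excluded (maximum dimension $5$), while any flag interpolating between the $[48,5,24]$ code and the $[48,9,20]$ self-complementary supercode containing it yields subcodes of minimum distance exactly $20$---they lie in a minimum-distance-$20$ code yet cannot reach $24$---so these entries equal $20$. Because their dimension exceeds $10$, the subcodes at dimensions $11$ through $16$ cannot have minimum distance $20$ or more, so each $a_i$ is a multiple of $4$ in $\{12,16\}$; at dimensions $17$ through $24$ the value $16$ is additionally excluded by Lemma~\ref{selfcomp17}, forcing $12$. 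Finally, since the dimension-$10$ subcode contains the dimension-$9$ subcode of minimum distance $20$, monotonicity gives $b\le 20$, whence $b\in\{12,16,20\}$.

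The part I expect to be genuinely hard, and which the theorem deliberately leaves as ranges, is the exact determination of $a_1,\dots,a_6$ and $b$. Pinning these down would require deciding, dimension by dimension, whether a self-complementary subcode of the \emph{larger} admissible minimum distance exists and sits atop a chain already realizing all the fixed smaller-dimensional entries; for example $b=20$ exactly when some dimension-$9$ subcode realizing the profile extends to a $[48,10,20]$ self-complementary subcode, which the maximality of the constructed $[48,9,20]$ code does not by itself settle. Establishing only the stated ranges, by contrast, requires nothing beyond the complement bound, the three maximum-dimension facts, and double-evenness, so I would structure the write-up precisely around those four ingredients.
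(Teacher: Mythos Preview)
Your proposal is correct and follows essentially the same route as the paper: fix $C_{23}=\langle\mathbf 1\rangle$, observe that every chain member is therefore self-complementary (hence doubly-even), invoke the $[48,5,24]\subset[48,9,20]$ self-complementary flag recorded earlier in the section to realize the entries $24,24,24,24,20,20,20,20$, and then use the dimension caps (at most $5$ for $d=24$ via the unique $[48,6,24]$ code omitting $\mathbf 1$, at most $10$ for $d=20$, and Lemma~\ref{selfcomp17} for $d=16$) to force $12$ at dimensions $17$--$24$ and confine the remaining entries to the stated ranges. Your explicit ``complement bound'' $d\le 24$ and the weight-set remark on the $[48,5,24]$ code are nice clarifications the paper leaves implicit, but the argument is the same.
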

\begin{proof}
Since $q_{48}$ contains the all-one vector, the repetition code $[48,1,48]$ must be the one dimensional subcode first appearing in the subcode chain.  By~\cite{Til} there is a unique $[48,6,24]$ code with non-zero weights $24,32$; since this code does not contain the all-one vector it cannot be involved in the inverse dictionary order subcode chain.  Hence $k \le 5$ for a $[48,k,24]$ code involved in the subcode chain.  Applying Random (Supercode) Algorithm II to the $[48,1,48]$ subcode of $q_{48}$ we obtained a subcode chain involving a $[48,5,24]$ code contained in a $[48,9,20]$ subcode of $q_{48}$.  Therefore ${\mbox{ODP}}^{inv}[q_{48}]_i=24$ for $2 \le i \le 5$, and ${\mbox{ODP}}^{inv}[q_{48}]_j=20$ for $6 \le i \le 9$.  The maximum dimension for a $[48,k,20]$ code is $k=10$ by Grassl's table~\cite{codetables}, hence ${\mbox{ODP}}^{inv}[q_{48}]_{10}=b$ for $b \in \{12,16,20\}$ and also ${\mbox{ODP}}^{inv}[q_{48}]_j=a_i$ for $11 \le j \le 16$ and $a_i \in \{12,16\}$.  Finally,  ${\mbox{ODP}}^{inv}[q_{48}]_i=12$ for $17 \le i \le 24$ by Lemma~\ref{selfcomp17}.
\end{proof}

\begin{lem} \label{lem:48-k-16}
There does not exist a [48,k,16] subcode $C$ of $q_{48}$ for $k \ge 17$.
\end{lem}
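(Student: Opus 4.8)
The plan is to reduce the statement to the self-complementary case already settled in Lemma~\ref{selfcomp17}. First I would note that it suffices to exclude a $[48,17,16]$ subcode: if a $[48,k,16]$ subcode existed for some $k \ge 18$, then choosing a weight-$16$ codeword and extending it to any $17$-dimensional subspace produces a subcode that is a subcode of a minimum-distance-$16$ code (so $d \ge 16$) yet contains a weight-$16$ word (so $d \le 16$), i.e.\ a $[48,17,16]$ subcode of $q_{48}$. So assume $C$ is a $[48,17,16]$ subcode of $q_{48}$. Since $q_{48}$ is self-dual it contains $\mathbf{1}$, and since $C\subseteq q_{48}$ is doubly-even every nonzero weight of $C$ lies in $\{16,20,24,28,32,36,48\}$, with $48$ occurring precisely when $\mathbf{1}\in C$.

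Next I would split on the location of $\mathbf{1}$ together with the presence of weight-$36$ codewords. If $\mathbf{1}\in C$, then $C$ is itself a self-complementary $[48,17,16]$ subcode, contradicting Lemma~\ref{selfcomp17}. If $\mathbf{1}\notin C$ and $C$ has no codeword of weight $36$, then $C' = C + \langle\mathbf{1}\rangle$ is an $18$-dimensional self-complementary subcode of $q_{48}$; since the complement of a word of weight $w \le 32$ has weight $48-w \ge 16$ while $C$ already realizes weight $16$, we get $d(C')=16$, so $C'$ is a self-complementary $[48,18,16]$ subcode, again contradicting Lemma~\ref{selfcomp17}. These two subcases are routine and dispose of everything except the configuration $\mathbf{1}\notin C$ with a codeword $c$ of weight $36$.

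The weight-$36$ subcase is the main obstacle. Here $c+\mathbf{1}$ has weight $12$, so $C' = C+\langle\mathbf{1}\rangle$ is only a self-complementary $[48,18,12]$ code, and such codes cannot be excluded by dimension and minimum distance alone ($q_{48}$ itself is one); moreover, there is in general no hyperplane of $C$ avoiding all weight-$36$ words, so one cannot simply pass to a $16$-dimensional subspace and re-run the previous paragraph. To close this case I would return to $C$ itself and invoke the MacWilliams/shortening machinery used in the preceding lemmas: applying Lemma~\ref{MacId} with $A_0=1$, $A_{48}=0$ and $A_{36}>0$, together with the nonnegativity and integrality of the dual weights $A_w^{\perp}$, to constrain the weight distribution of $C$, and then shortening an appropriate code on the support of a minimum-weight codeword via Lemma~\ref{thm_shorten} to obtain a shorter code whose forced parameters and weight set contradict a nonexistence entry in Grassl's table~\cite{codetables} (or a classification of short high-distance codes such as~\cite{Til,BouJaffe}). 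The difficulty is precisely that $C$ is not self-complementary, so the weight system no longer collapses to the clean invertible form exploited in Lemma~\ref{selfcomp17}, and one must feed in external nonexistence data to force the contradiction.
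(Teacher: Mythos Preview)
Your reduction to $k=17$ and the split into the three subcases is fine, and the first two subcases are handled correctly. In particular, the observation that if $\mathbf 1\notin C$ and $C$ has no weight-$36$ codeword then $C+\langle\mathbf 1\rangle$ is a self-complementary $[48,18,16]$ subcode (so Lemma~\ref{selfcomp17} already kills it) is a pleasant shortcut that the paper does not use.

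The genuine gap is in the weight-$36$ subcase. You do not actually prove anything there; you only outline a programme (MacWilliams, then shortening, then appeal to external tables or classifications) and explicitly claim that ``one must feed in external nonexistence data to force the contradiction.'' That claim is incorrect, and it is exactly the point where the paper's argument is sharper than you anticipate. When $\mathbf 1\notin C$ the nonzero weights of $C$ lie in $\{16,20,24,28,32,36\}$, so there are seven unknowns $A_0,A_{16},\dots,A_{36}$. The first seven MacWilliams identities ($w=0,\dots,6$) give an invertible $7\times 7$ system $2^{17}B=PA$ with $B=[A_0^\perp,\dots,A_6^\perp]^T$. Inverting, the first row of $2^{17}P^{-1}$ reads
\[
A_0 \;=\; \tfrac{34}{21}A_0^\perp + \tfrac{17}{21}A_1^\perp + \tfrac{65}{168}A_2^\perp + \tfrac{1}{6}A_3^\perp + \tfrac{1}{14}A_4^\perp + \tfrac{1}{42}A_5^\perp + \tfrac{1}{168}A_6^\perp,
\]
and since $A_0=A_0^\perp=1$ and every $A_i^\perp\ge 0$, this forces a nonnegative combination to equal $-\tfrac{13}{21}$. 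No shortening, no Grassl table, and no case analysis on whether $A_{36}>0$ is needed; the single linear-algebra step disposes of the entire non-self-complementary case at once. So your third subcase is not merely unfinished but also misdiagnosed: the system \emph{does} collapse to a clean invertible form, just a $7\times 7$ one rather than the $4\times 4$ one in Lemma~\ref{selfcomp17}.
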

\begin{proof}
Suppose a [48,17,16] subcode of $C$ exists.  Since the self-complementary case is already considered in Lemma~\ref{selfcomp17}, we only need to examine the case where the maximum weight in $C$ is 36 since the non-zero weights in $q_{48}$ are 12, 16, 20, 24, 28, 32, 36, 48.  Hence the possible non-zero weights of $C$ are {16,20,24,28,32,36}.
Define the following matrices:
\[
 \begin{array}{lllllllll}
B &=& [A^\perp_{0} & A^\perp_{1} & A^\perp_{2} & A^\perp_{3} & A^\perp_{4} & A^\perp_{5} & A^\perp_{6}]^T  , \\
 A &=& [A_{0} & A_{16} & A_{20} & A_{24} & A_{28} & A_{32} & A_{36}]^T.
\end{array}
\]
Then the MacWilliams Identities yield the matrix equation $2^{17} B = PA$, where

 \[P = \left[ \begin{array}{rrrrrrr}
       1 &        1 &        1 &        1 &        1 &        1 &        1 \\
      48 &       16 &        8 &        0 &       -8 &      -16 &      -24 \\
    1128 &      104 &        8 &      -24 &        8 &      104 &      264 \\
   17296 &      304 &     -104 &        0 &      104 &     -304 &    -1736 \\
  194580 &       20 &     -300 &      276 &     -300 &       20 &     7380 \\
 1712304 &    -2672 &      456 &        0 &     -456 &     2672 &   -19800 \\
12271512 &    -7272 &     2808 &    -2024 &     2808 &    -7272 &    25080 \\
\end{array}
\right].
\]
Isolating the matrix $A$ yields the matrix equation $2^{17} P^{-1} B = A$ where
 \[2^{17} P^{-1} = \left[ \begin{array}{rrrrrrr}
   34/21 &    17/21 &   65/168 &      1/6 &     1/14 &     1/42 &    1/168 \\
    4788 &     1698 &   2109/4 &      135 &       23 &        1 &     -3/4 \\
   30000 &     4592 &      -61 &     -312 &      -92 &       -8 &        3 \\
   61360 &      680 &     -965 &      140 &      132 &       20 &       -5 \\
212448/7 & -39488/7 &    158/7 &       96 &   -536/7 &   -160/7 &     30/7 \\
    4482 &    -1239 &   3633/8 &    -81/2 &     19/2 &     25/2 &    -15/8 \\
   272/3 &   -272/3 &     65/3 &    -56/3 &        4 &     -8/3 &      1/3 \\

\end{array}
\right].
\]
The first row of $2^{17} P^{-1}$ implies
\[
\frac{17}{21} A^\perp_{1} + \frac{65}{168} A^\perp_{2} + \frac{1}{6} A^\perp_{3} + \frac{1}{14} A^\perp_{4} + \frac{1}{42} A^\perp_{5} + \frac{1}{168} A^\perp_{6} = -\frac{13}{21},
\]
which is impossible as $A^\perp_{i} \ge 0$ for all $i$.  Hence no such code $C$ can exist.

\end{proof}


\begin{thm}
\[{\mbox{ODP}}^{dic}[q_{48}]=[12,12,12,12,12,12,12,12,a_1,a_2,16,16,16,16,b_1,b_2,b_3,b_3,c_1,c_2,c_3,c_4,d,e]
\]
where $a_i \in \{12,16\}$, $b_k \in \{16,20\}$, $c_l \in \{16,20,24\}$, $d \in \{16,20,24,28,32\}$, and $e \in \{20,24,28,32,36,48\}$.
\end{thm}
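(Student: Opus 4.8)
The plan is to determine each entry of $\mbox{ODP}^{dic}[q_{48}]$ by trapping it between an upper bound coming from non-existence results and a lower bound coming from explicit subcodes, and then to read off an exact value wherever these two bounds coincide. Two elementary facts underlie everything: every subcode of $q_{48}$ has minimum distance at least $d(q_{48})=12$ and all its nonzero weights lie in $\{12,16,20,24,28,32,36,48\}$; and a distance profile is nondecreasing, so the entries are monotone along the chain. It is cleanest to index the argument by the dimension $m=24,23,\dots,1$ of the link in the optimal chain, the entry at dimension $m$ being its minimum distance.

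First I would establish the upper bounds one weight value at a time, each from the maximum dimension of a code with that minimum distance. Lemma~\ref{lem:48-k-16} rules out a $[48,m,16]$ subcode for $m\ge 17$, so together with the universal lower bound $12$ this forces the eight leading entries (dimensions $24$ down to $17$) to equal $12$. Grassl's table~\cite{codetables} gives maximum dimension $10$ for $d=20$, capping dimensions $11$--$16$ at $16$; the Griesmer bound excludes $[48,7,24]$, $[48,3,28]$ and $[48,2,36]$, yielding the caps $\le 20$ for dimensions $7$--$10$, $\le 24$ for dimensions $3$--$6$, $\le 32$ for dimension $2$, and the trivial $\le 48$ for dimension $1$. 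Since admissible weights are spaced by $4$, each cap immediately restricts the entry to the finite set written in the statement.

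Next I would supply the matching lower bounds. The determined block at dimensions $11$--$14$ is handled by the maximal $[48,14,16]$ subcode produced in Section~\ref{sec_n-48} via (Subcodes) Chain Algorithm~I (or Random (Subcodes) Algorithm~I): any link of dimension $11,12,13$ inside it has minimum distance at least $16$ (subcode) and at most $16$ (the cap), hence exactly $16$, and one checks that this subcode always sits in a chain of $q_{48}$ matching the optimal entries at dimensions $\ge 15$ — when no $[48,16,16]$ subcode is available one simply takes the dimension-$15,16$ links to be its distance-$12$ supercodes — so these four entries equal $16$ irrespective of $a_1,a_2$. Monotonicity then forces every entry at dimension $\le 10$ to be at least $16$, fixing the lower ends of the $b_k$, $c_l$ and $d$ ranges, while the explicit dictionary-order chain computed by the random algorithm contributes a dimension-$1$ link of weight at least $20$, giving $e\ge 20$. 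Assembling upper and lower bounds yields the determined entries $12$ (dimensions $17$--$24$) and $16$ (dimensions $11$--$14$) and the stated ranges elsewhere.

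The main obstacle is closing the surviving ranges: $a_i$ at dimensions $15,16$, the $b_k$ at dimensions $7$--$10$, the $c_l$ at dimensions $3$--$6$, and $d,e$ at dimensions $2,1$. Each reduces to deciding whether $q_{48}$ actually \emph{contains}, inside a chain compatible with the links above it, a subcode meeting the relevant cap, for instance a $[48,16,16]$ or $[48,10,20]$ subcode. This is precisely the computational bottleneck: the greedy dictionary-order search over a $24$-dimensional code is astronomically large (the Gaussian-binomial subcode counts recalled in the preliminaries), so even the coset pruning of the Chain Algorithms leaves the equivalence tests at each level infeasible, and the MacWilliams/shortening non-existence arguments in the spirit of Lemmas~\ref{MacId}, \ref{thm_shorten} and~\ref{lem:48-k-16} do not settle every borderline weight distribution. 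This is why the theorem can only be stated as a partial optimum distance profile.
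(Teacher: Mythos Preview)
Your overall strategy matches the paper's: sandwich each entry between an upper bound from non-existence results (Lemma~\ref{lem:48-k-16} at dimensions $17$--$24$, Grassl's table to cap dimensions $11$--$16$ at $16$, and Griesmer-type bounds below) and a lower bound from monotonicity plus the explicit $[48,14,16]$ subcode found earlier in Section~\ref{sec_n-48}. The upper-bound half and the entries at dimensions $11$--$14$ are handled essentially as in the paper, and your remark that the $[48,14,16]$ subcode can be fitted into an optimal chain above it ``irrespective of $a_1,a_2$'' is the right observation (though when $a_2=16$ one should take an arbitrary $[48,14]$ subcode of the optimal $C_{15}$ rather than the specific $[48,14,16]$ code produced by the algorithm).

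There is, however, a genuine gap in your argument that $e\ge 20$. You assert that ``the explicit dictionary-order chain computed by the random algorithm contributes a dimension-$1$ link of weight at least $20$,'' but no such full chain is produced anywhere --- the random algorithm yields isolated subcodes, not an optimal chain --- and exhibiting \emph{some} chain whose last link has weight $\ge 20$ does not establish $\mbox{ODP}^{dic}[q_{48}]_1\ge 20$: the lex-maximal profile may branch away from yours at one of the undetermined intermediate positions ($b_k$, $c_l$ or $d$), after which your chain's tail is irrelevant to the value of $e$. Monotonicity from $\mbox{ODP}^{dic}[q_{48}]_{14}=16$ by itself gives only $e\ge 16$.

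The paper supplies this missing step with a different argument. If $e=16$, monotonicity forces all entries at dimensions $1$ through $14$ to equal $16$; a switching argument then shows the $[48,14,16]$ link $C_{14}$ in any optimal chain must be a constant-weight code (were there a codeword of weight $>16$ in $C_{14}$, extending the top of the chain through that codeword would yield a strictly lex-larger profile). But the MacWilliams identities force any $[48,k]$ constant-weight code with weight $16$ to satisfy $A_1^\perp=2^{5-k}+16$, hence $k\le 5$, contradicting $k=14$. This is what actually excludes $e=16$ and pushes $e$ to the range $\{20,24,28,32,36,48\}$.
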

\begin{proof}
Note that the non-zero weights in $q_{48}$ are 12,16,20,24,28,32,36,48.  Therefore by Grassl's table~\cite{codetables} we may deduce the following:\\ \\
${\mbox{ODP}}^{dic}[q_{48}]_j=12 $ for $17 \le j \le 24$, by the previous lemma.\\ \\
${\mbox{ODP}}^{dic}[q_{48}]_j=a_i $ for $15 \le j \le 16$ and $a_i \in \{12,16\}$. \\ \\
${\mbox{ODP}}^{dic}[q_{48}]_j=16 $ for $11 \le j \le 14$, because as mentioned at the beginning of this section, there exists a maximal $[48,14,16]$ subcode of $q_{48}$.\\ \\
${\mbox{ODP}}^{dic}[q_{48}]_j=b_i $ for $7 \le j \le 10$ and $b_i \in \{16,20\}$. \\ \\
${\mbox{ODP}}^{dic}[q_{48}]_j=c_i $ for $3 \le j \le 6$ and $c_i \in \{16,20,24\}$. \\ \\
${\mbox{ODP}}^{dic}[q_{48}]_2=d $ for $d \in \{16,20,24,28,32\}$. \\ \\
${\mbox{ODP}}^{dic}[q_{48}]_1=e $ for $e \in \{20,24,28,32,36,48\}$. \\

Note that 16 is not present for values of $e$ because if so then the $[48,14,16]$ code involved in the subcode chain would have to be a constant weight code.  There does not exist a constant weight code (with weight 16) of dimension greater than 5 by the following reasoning. \\ \\
Suppose there exists a $[48,k,16]$ constant weight code.  Define the following matrices:
\[
 \begin{array}{llll}
B &=& [A^\perp_{0} & A^\perp_{1}]^T  , \\
 A &=& [A_{0} & A_{16}]^T.
\end{array}
\]
Then the MacWilliams Identities yield the matrix equation $2^{k} B = PA$, where

 \[P = \left[ \begin{array}{rr}
       1 &        1  \\
      48 &       16 \\
\end{array}
\right].
\]
Since $A^\perp_{0} = 1 = A_{0}$, then the matrix equation yields the following system:

 \[ \begin{array}{ccc}
    2^k &=&  1 + A_{16}  \\
    2^k A^\perp_{1} &=&  48 + 16A_{16}.
\end{array}\]
Solving for $A_{16}$ in the first equation and substituting into the second equation yields:
\[ \begin{array}{c}
     2^k A^\perp_{1} =  48 + 16(2^k -1).
\end{array}\]
Solving for $A^\perp_{1}$ we obtain:

\[ \begin{array}{c}
     A^\perp_{1} = 2^{5-k} + 16.
\end{array}\]
And finally since $A^\perp_{1} $ is an integer, then $k \le 5$.
\end{proof}

From the previous ODPs that have been found for Type II codes, dimension optimal subcodes are involved in subcode chains.  Therefore we have the following:

\medskip

{\bf Conjecture:}
  A $[48,6,24]$ code is involved in a subcode chain for the ODP in dictionary order.

\medskip

Thus we have the following corollary.

\begin{cor}
If a $[48,6,24]$ code is involved in a subcode chain for the ODP in dictionary order, then
\[{\mbox{ODP}}^{dic}[q_{48}]=[12,12,12,12,12,12,12,12,a_1,a_2,16,16,16,16,b_1,b_2,b_3,b_3,24,24,24,24,32,32]
\]
where $a_i \in \{12,16\}$ and $b_j \in \{16,20\}$.
\end{cor}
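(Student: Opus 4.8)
The plan is to leverage the preceding theorem, which already pins the dictionary-order ODP of $q_{48}$ down to the unknown entries $c_1,c_2,c_3,c_4 \in \{16,20,24\}$, $d \in \{16,20,24,28,32\}$, and $e \in \{20,24,28,32,36,48\}$, and to use the hypothesis to resolve exactly these. The hypothesis says the chain realizing the dictionary-order ODP passes through the unique $[48,6,24]$ code at its $6$-dimensional term; call this code $R$. Since $d(R)=24$, its presence forces $c_1 = {\mbox{ODP}}^{dic}[q_{48}]_6 = 24$. Each ${\mbox{ODP}}^{dic}[q_{48}]_j$ is the minimum distance of the dimension-$j$ term of the chain, and passing to a subcode cannot decrease the minimum distance, so the values are monotone as the dimension drops: $c_1 \le c_2 \le c_3 \le c_4$. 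Combined with $c_l \in \{16,20,24\}$ and $c_1=24$, this gives $c_1=c_2=c_3=c_4=24$.

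To handle $d$ and $e$ I would first determine the weight enumerator of $R$. Since $R$ meets the Griesmer bound ($\sum_{i=0}^{5}\lceil 24/2^i\rceil = 24+12+6+3+2+1 = 48$), it has no identically-zero coordinate, for otherwise puncturing would yield an impossible $[47,6,24]$ code. Hence every coordinate is balanced and $\sum_{c\in R}\wt(c) = 48\cdot 2^5 = 1536$; with the only nonzero weights being $24,32$ and $A_{24}+A_{32}=63$ this forces $A_{24}=60$ and $A_{32}=3$. In particular any subcode of $R$ of dimension $\ge 3$ has more than three nonzero words and so cannot be constant-weight $32$; thus the dimension $3,4,5$ terms of the chain are also $24$, consistent with the $c_l$. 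The dimension-$2$ and dimension-$1$ terms $d$ and $e$ are minimum distances of subcodes of $R$, so $d,e \in \{24,32\}$.

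The key structural input, and the step I expect to be the main obstacle, is showing that $R$ actually contains a $[48,2,32]$ subcode, i.e.\ that its three weight-$32$ codewords together with $\0$ form a subspace. I would establish this by identifying $R$ with the Solomon--Stiffler (Belov) code obtained from the simplex code $[63,6,32]$ by deleting the $15$ coordinates of a $4$-dimensional subspace $W \le \F_2^6$ (note $24 = 32-8$): a nonzero functional $f$ then has weight $32$ exactly when $f$ vanishes on $W$, and the set of such $f$ is precisely the $2$-dimensional annihilator $W^{\perp}$, hence a subcode. Since a $[48,6,24]$ code is unique by~\cite{Til}, $R$ is this code and the $[48,2,32]$ subcode exists. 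Alternatively this can be verified directly from an explicit generator matrix of $R$.

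Finally I would read off $d$ and $e$ from the dictionary order. Having fixed the chain through $R$, the remaining terms are maximized lexicographically from the top down, so the dimension-$2$ term (index $22$) is maximized before the dimension-$1$ term (index $23$). Choosing the dimension-$2$ subcode to be the $[48,2,32]$ found above — which extends to the required dimension-$3$ subcode inside $R$ without cost, since every dimension-$3$ subcode of $R$ has distance $24$ anyway — gives $d=32$, the largest weight present in $R$; then any dimension-$1$ subcode inside it is a weight-$32$ word, so $e=32$. Both values respect the theorem's ranges and cannot exceed $32$ inside $R$, so the profile is exactly the claimed $[12,\dots,16,16,b_1,b_2,b_3,b_3,24,24,24,24,32,32]$.
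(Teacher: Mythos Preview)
Your argument is correct, and in fact considerably more detailed than what the paper offers: the paper states this corollary immediately after the conjecture with no proof at all, treating it as self-evident from the preceding theorem together with known facts about the unique $[48,6,24]$ code. You have supplied exactly the nontrivial ingredient the paper leaves implicit, namely that the three weight-$32$ words of $R$ form a $2$-dimensional subspace (so that $d=e=32$ is attainable and hence forced by the dictionary order). Your Solomon--Stiffler description of $R$ as the $[63,6,32]$ simplex punctured on a $4$-flat is a clean way to see this; the annihilator argument gives both the weight enumerator and the $[48,2,32]$ subcode at once, making your separate balanced-coordinate count of $A_{24},A_{32}$ redundant (and that count silently uses the fact that the nonzero weights are exactly $\{24,32\}$, which you only justify afterwards via the Belov model or~\cite{Til}).

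One small point worth making explicit in your write-up: when you pass from ``a chain realizing the ODP contains $R$'' to $d=32$, you are implicitly replacing the tail of that optimal chain below dimension $6$ by a new tail through the $[48,2,32]$ subcode. You should state clearly that this replacement leaves positions $0$--$21$ of the distance profile unchanged (since every $3$-, $4$-, $5$-dimensional subcode of $R$ has minimum distance exactly $24$), so the modified chain dominates the original in dictionary order and hence must itself realize the ODP. You do say this, but it is the hinge of the argument and deserves a sentence of its own rather than a parenthetical.
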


We were able to find a doubly-even self-complementary $[48,16,16]$ code with generator matrix $G_{[48,16,16]}$.  Such a code was previously not known to exist. Only one singly-even self-complementary $[48,16,16]$ code was found by A. Kohnert~\cite{Koh}.

 The dual code has minimum distance $d=4$.
 The generator matrixfor this doubly-even self-complementary $[48,16,16]$ code is the following:
{\small
\[ G_{[48,16,16]} = \left[ \begin{array}{c}
1 0 0 1 0 0 0 0 0 0 0 0 0 0 1 0 0 0 1 1 0 0 0 1 0 0 1 0 1 1 1 0 0 0 1 1 1 0 0 1
    0 0 0 1 0 1 0 0 \\
0 1 0 1 0 0 0 0 0 0 0 0 0 0 0 0 0 1 1 1 1 0 0 0 0 0 1 0 0 1 0 0 1 1 0 0 1 1 0 0
    1 0 1 1 0 1 0 0 \\
0 0 1 1 0 0 0 0 0 0 0 0 0 0 1 0 0 0 1 0 1 1 1 1 0 0 0 0 1 0 1 0 1 0 0 1 0 1 0 1
    1 1 0 1 1 1 1 0 \\
0 0 0 0 1 0 0 0 0 0 0 0 0 0 1 0 0 0 1 0 1 0 0 0 1 1 1 1 0 0 1 0 1 1 1 0 1 0 1 1
    1 1 0 0 0 1 1 1 \\
0 0 0 0 0 1 0 0 0 0 0 0 0 0 0 0 0 0 0 1 1 1 1 0 0 1 0 0 0 0 1 0 1 0 1 0 1 0 1 0
    1 0 1 1 0 1 0 1 \\
0 0 0 0 0 0 1 0 0 0 0 0 0 0 1 0 0 1 0 0 1 1 1 0 1 0 1 1 0 0 0 0 0 1 0 1 1 0 0 1
    0 0 0 0 1 1 0 1 \\
0 0 0 0 0 0 0 1 0 0 0 0 0 0 0 0 0 1 1 1 1 1 1 1 1 1 0 0 1 0 1 0 0 0 0 0 0 1 1 1
    1 0 0 0 0 0 0 0 \\
0 0 0 0 0 0 0 0 1 0 0 0 0 0 0 0 0 1 1 0 0 0 0 1 1 0 1 1 0 1 0 0 0 1 1 1 1 1 1 1
    1 0 0 0 0 0 0 0 \\
0 0 0 0 0 0 0 0 0 1 0 0 0 0 0 0 0 0 1 1 0 0 1 1 1 0 0 1 1 1 1 0 0 1 0 0 1 1 0 1
    1 1 1 1 1 1 0 0 \\
0 0 0 0 0 0 0 0 0 0 1 0 0 0 0 0 0 1 0 0 1 1 0 1 1 0 0 1 1 1 1 0 0 0 1 1 0 0 1 1
    1 1 1 1 1 0 1 0 \\
0 0 0 0 0 0 0 0 0 0 0 1 0 0 1 0 0 1 1 1 0 0 1 1 1 1 1 1 1 0 0 0 0 0 1 0 1 0 0 0
    0 0 1 0 1 0 0 0 \\
0 0 0 0 0 0 0 0 0 0 0 0 1 0 1 0 0 0 0 1 0 1 0 0 0 1 1 1 1 0 0 0 0 0 1 1 0 0 0 1
    1 1 0 0 1 1 1 0 \\
0 0 0 0 0 0 0 0 0 0 0 0 0 1 1 0 0 1 1 1 1 0 0 1 1 0 0 0 0 0 0 0 0 1 1 0 0 0 0 1
    1 0 0 1 1 1 1 0 \\
0 0 0 0 0 0 0 0 0 0 0 0 0 0 0 1 0 0 0 0 1 1 0 1 0 1 1 0 0 1 1 0 1 1 0 1 0 1 1 0
    1 0 0 1 1 0 0 0 \\
0 0 0 0 0 0 0 0 0 0 0 0 0 0 0 0 1 0 0 1 0 0 1 0 1 1 1 0 0 1 1 0 1 1 0 1 0 0 0 0
    1 1 1 0 0 1 1 0 \\
0 0 0 0 0 0 0 0 0 0 0 0 0 0 0 0 0 0 0 0 0 0 0 0 0 0 0 0 0 0 0 1 1 1 1 1 1 1 1 1
    1 1 1 1 1 1 1 0 \\
\end{array}
\right]
\]
}

{\bf Open Problem 1:} Determine if the code with generator matrix $G_{[48,16,16]}$ is equivalent to a subcode of $q_{48}$.

\subsection{$n=72$}

Note that $q_{72}$ (the extended quadratic residue code of length 72) is a Type II
$[72,36,12]$ code. Due to the complexity, we use Random (Subcodes) Algorithm I.
  For $d'=16$, there is a maximal $[72,29,16]$ subcode of
$q_{72}$ with $A_{16}=2160$. The best known minimum distance optimal $[72,29]$ code
has $d=16$ (and at most $d \le 21$) with $A_{16}=28417$, given in
Magma. Hence our code is not equivalent to this
code.  For $d'=20$, there is a maximal $[72,23,20]$
 subcode with $A_{20}=3046$.
 The best known minimum distance optimal $[72,23]$ code has $d=20$ (and at most $\le 24$)
 with $A_{20}=7120$
 given in Magma. Hence our code is not equivalent to this code.

We start from a best known linear $[72,31,20]$ code, given
in Magma. Let $C_1$ be this code and let $d'=16 < d=20$. Using Random (Supercode) Algorithm II, we have constructed in a few seconds a {\em doubly-even
self-orthogonal $[72,35,16]$ code} $C'$ containing
$C_1$ with $A_{16}=129972$. It is known from Magma that there is a best known minimum distance code with parameters
$[72,35,16]$. This is a doubly-even self-orthogonal
code with $A_{16}={136116}$. Hence our code {\em is not equivalent} to the known code. We do not know how many doubly-even self-orthogonal $[72,35,16]$ codes exist.

\section{Conclusion}

The optimum distance profile for a linear code (and any code in general) is a relatively new concept developed in~\cite{CheVin_2010} and~\cite{LuoVinChe_2010}.  This area is particularly interesting due to its practical applications.  In this paper we relate the optimum distance profile of a code to the concept of maximal subcodes of high minimum distance.  We develop four algorithms which are highly efficient in comparison to a brute force examination of all subcodes.

The classification of self-dual codes continues to be an extremely active area in coding theory.  A particularly interesting class of self-dual codes is those of Type II which have high minimum distance (called extremal or near-extremal).  It is notable that this class of codes contains famous unique codes:  the extended Hamming $[8,4,4]$ code, the extended Golay $[24,12,8]$ code, and the extended quadratic residue $[48,24,12]$ code.  A long standing open problem in coding theory is to prove the existence or non-existence of a Type II $[72,36,16]$ code. The aim of this paper is to shed light on the structure of this interesting class of codes.  We examine the maximal subcodes and ODPs of Type II codes for lengths up to 32.
Of recent significance is the classification of length 40 Type II codes~\cite{Har2011}.  The examination of these codes would be extensive work as there are 16470 Type II $[40,20,8]$ codes (the highest minimum distance in this case is 8 which is not minimum distance optimal by~\cite{codetables}).  Therefore we examined a more interesting case, the unique Type II code $q_{48}$ of length $48$.

\bigskip









\vspace{1 in}

 \centerline{\Large{{\bf Appendix}}}





{\small

 \[ C81^1 = \left[ \begin{array}{c}
RC_1\\
\hline
1 0 0 0 0 0 0 0 0 0 0 0 0 0 0 1 0 0 0 1 0 1 1 0 0 0 0 0 1 1 1 0\\
0 1 0 0 0 0 0 0 0 0 0 0 0 0 1 0 0 0 0 1 0 1 0 1 0 0 1 1 0 0 0 1\\
0 0 1 0 0 0 0 0 0 0 0 0 0 0 1 0 0 0 1 1 0 0 0 1 0 1 0 0 1 0 0 1\\
0 0 0 1 0 0 0 0 0 0 0 0 0 0 0 1 0 0 0 1 0 0 0 0 1 1 0 0 1 1 0 1\\
0 0 0 0 1 0 0 0 0 0 0 1 0 0 0 0 0 0 1 0 0 1 0 1 0 0 1 1 0 0 1 0\\
\end{array}
\right]
~~~
 C82^1 = \left[ \begin{array}{c}
 RC_1\\
\hline
1 0 0 0 0 0 0 0 0 0 0 0 0 0 0 1 0 1 1 0 1 1 1 0 0 0 0 1 1 1 1 1\\
0 1 0 0 0 0 0 0 0 0 0 0 0 0 0 1 0 0 1 0 0 0 0 0 0 0 1 1 1 0 1 1\\
0 0 1 0 0 0 0 0 0 0 0 0 0 0 0 1 0 1 1 0 1 1 0 1 1 1 0 1 0 1 0 1\\
0 0 0 1 0 0 0 0 0 0 0 0 0 0 0 1 0 0 0 0 1 1 1 0 0 1 0 0 0 1 0 1\\
0 0 0 0 1 0 0 0 0 0 0 0 0 0 0 1 0 0 0 1 1 0 1 0 0 0 0 1 0 0 1 1\\
\end{array}
\right]
\]
}

{\small

\[ C83^1 = \left[ \begin{array}{c}
 RC_1\\
\hline
1 0 0 0 0 0 0 1 0 0 0 1 0 0 0 1 0 0 1 1 0 1 0 1 0 1 1 0 0 1 1 0\\
0 1 0 0 0 0 0 1 0 0 0 1 0 0 1 0 0 0 1 1 0 1 1 0 0 1 0 1 1 0 0 1\\
0 0 1 0 0 0 0 1 0 0 0 1 0 0 1 0 0 0 0 1 0 0 1 0 0 0 1 0 0 0 0 1\\
0 0 0 1 0 0 0 1 0 0 0 1 0 0 0 1 0 0 1 0 0 0 1 0 0 0 1 0 0 0 1 0\\
0 0 0 0 1 0 0 1 0 0 0 0 0 0 0 0 0 0 0 1 0 1 1 1 0 0 1 0 0 0 1 0\\
\end{array}
\right]
~~~
 C84^1 = \left[ \begin{array}{c}
 RC_1\\
\hline
1 0 0 0 0 0 0 0 0 0 0 0 0 0 0 1 0 0 0 1 0 1 1 0 0 0 0 0 1 1 1 0\\
0 1 0 0 0 0 0 0 0 0 0 0 0 0 1 0 0 0 0 1 0 1 0 1 0 0 1 1 0 0 0 1\\
0 0 1 0 0 0 0 0 0 0 0 0 0 0 1 0 0 0 1 1 0 0 0 1 0 1 0 0 1 0 0 1\\
0 0 0 1 0 0 0 0 0 0 0 0 0 0 0 1 0 0 0 0 0 0 0 1 1 0 1 1 0 1 0 1\\
0 0 0 0 1 0 0 0 0 0 0 1 0 0 0 0 0 0 1 1 0 1 0 0 0 1 0 0 1 0 1 0\\
\end{array}
\right]
\]
}

{\small

\[ C85^1 = \left[ \begin{array}{c}
 RC_1\\
\hline
1 0 0 0 0 0 0 0 0 0 0 0 0 0 0 1 0 0 0 1 0 1 1 0 0 0 0 0 1 1 1 0\\
0 1 0 0 0 0 0 0 0 0 0 0 0 0 0 1 0 1 0 1 0 1 1 1 1 1 0 1 1 0 1 0\\
0 0 1 0 0 0 0 0 0 0 0 0 0 0 0 1 0 0 0 1 1 0 1 0 0 0 1 1 0 1 0 0\\
0 0 0 1 0 0 0 0 0 0 0 0 0 0 0 1 0 0 0 0 0 0 0 1 1 0 1 1 0 1 0 1\\
0 0 0 0 1 0 0 0 0 0 0 0 0 0 0 1 0 0 0 0 0 1 0 0 1 0 0 1 1 0 1 1\\
\end{array}
\right]
~~~
 C81^2 = \left[ \begin{array}{c}
 RC_2\\
\hline
1 0 0 0 0 0 0 0 0 0 0 1 0 0 0 0 0 0 1 0 0 1 0 1 0 0 1 0 1 1 0 0\\
0 1 0 0 0 0 0 0 0 0 0 1 0 0 1 1 0 0 1 0 0 1 1 0 1 1 1 0 1 1 0 0\\
0 0 1 0 0 0 0 0 0 0 0 1 0 0 1 1 0 0 0 0 0 0 1 0 1 0 0 1 0 1 0 0\\
0 0 0 1 0 0 0 0 0 0 0 1 0 0 0 0 0 0 1 0 0 0 1 1 1 1 1 0 1 1 1 1\\
0 0 0 0 1 0 0 0 0 0 0 0 0 0 0 1 0 0 0 1 0 1 1 0 1 1 1 0 1 1 1 1\\
\end{array}
\right]
\]
}

{\small
\[ C82^2 = \left[ \begin{array}{c}
 RC_2\\
\hline
1 0 0 0 0 0 0 1 0 0 0 1 0 1 1 1 0 0 0 1 0 1 1 1 0 1 1 1 1 1 1 0\\
0 1 0 0 0 0 0 1 0 0 0 1 0 1 0 0 0 0 0 1 0 1 0 0 0 1 0 0 0 0 0 1\\
0 0 1 0 0 0 0 1 0 0 0 1 0 0 1 0 0 0 0 1 0 0 1 0 0 0 1 0 0 0 0 1\\
0 0 0 1 0 0 0 1 0 0 0 1 0 0 0 1 0 0 0 1 0 0 0 1 0 0 0 1 0 0 0 1\\
0 0 0 0 1 0 0 1 0 0 0 0 0 1 1 0 0 0 0 0 0 1 1 0 0 0 0 0 1 0 0 1\\
\end{array}
\right]
~~~
 C83^2 = \left[ \begin{array}{c}
 RC_2\\
\hline
1 0 0 0 0 0 0 1 0 0 0 1 0 0 0 1 0 0 1 1 0 1 0 1 0 1 1 0 0 1 1 0\\
0 1 0 0 0 0 0 1 0 0 0 1 0 0 1 0 0 0 1 1 0 1 1 0 0 1 0 1 1 0 0 1\\
0 0 1 0 0 0 0 1 0 0 0 1 0 0 1 0 0 0 0 1 0 0 1 0 0 0 1 0 0 0 0 1\\
0 0 0 1 0 0 0 1 0 0 0 1 0 0 0 1 0 0 1 0 0 0 1 0 0 0 1 0 0 0 1 0\\
0 0 0 0 1 0 0 1 0 0 0 0 0 0 0 0 0 0 0 1 0 1 1 1 0 0 1 0 0 0 1 0\\
\end{array}
\right]
\]
}

{\small
\[ C84^2 = \left[ \begin{array}{c}
 RC_2\\
\hline
1 0 0 0 0 0 0 0 0 0 0 1 0 0 0 0 0 0 1 0 0 1 0 1 0 0 1 0 1 1 0 0\\
0 1 0 0 0 0 0 0 0 0 0 1 0 0 1 1 0 0 1 0 0 1 1 0 1 1 1 0 1 1 0 0\\
0 0 1 0 0 0 0 0 0 0 0 1 0 0 1 1 0 0 0 0 0 0 1 0 1 0 0 1 0 1 0 0\\
0 0 0 1 0 0 0 0 0 0 0 1 0 0 0 0 0 0 1 1 0 0 1 0 0 1 1 0 1 0 0 0\\
0 0 0 0 1 0 0 0 0 0 0 0 0 0 0 1 0 0 0 0 0 1 1 1 0 1 1 0 1 0 0 0\\
\end{array}
\right]
~~~
 C85^2 = \left[ \begin{array}{c}
 RC_2\\
\hline
1 0 0 0 0 0 0 0 0 0 0 1 0 0 0 0 0 0 1 0 0 1 0 1 0 0 1 0 1 1 0 0\\
0 1 0 0 0 0 0 0 0 0 0 1 0 0 0 0 0 1 1 0 0 1 1 1 1 1 1 1 0 1 0 0\\
0 0 1 0 0 0 0 0 0 0 0 1 0 0 0 0 0 1 0 0 0 0 1 1 1 0 0 0 1 1 0 0\\
0 0 0 1 0 0 0 0 0 0 0 1 0 0 0 0 0 0 1 0 0 0 1 1 1 1 1 0 1 1 1 1\\
0 0 0 0 1 0 0 0 0 0 0 0 0 0 0 1 0 0 0 1 0 1 1 0 1 1 1 0 1 1 1 1\\
\end{array}
\right]
\]
}

\comment{

\begin{table}
 \caption{ODP for Near Extremal FSD Even $[16,8,4]$ codes (Part 1)}
 \label{tab:fsd16_8_4_1}
 \begin{center}
{\small
\begin{tabular}{|l|c|c|c|c|c|}
\noalign{\hrule height1pt}
Weight Distribution & ODP & SD & ID & FSD & Total  \\
       \hline

$\begin{array}{l}A_{0,16}= 1, A_{4,12}= 4,\\ A_{6,10}= 96, A_{8}= 54 \end{array}$ & ${ODP}=[4, 6, 6, 8, 8, 8, 8, 16]$ & 0 & 1 & 0 & 1\\
\hline
$\begin{array}{l}A_{0,16}= 1, A_{4,12}= 8, \\ A_{6,10}= 80, A_{8}= 78\end{array}$ & $ \begin{array}{l}{ODP}^{dic}=[4,4,6,6,6,6,10,12] \\ {ODP}^{inv}=[4,4,6,6,6,8,8,16] \end{array}$  & 0 & 1 & 0 &  \\ \cline{2-5}
& ${ODP}=[4,4,6,6,8,8,8,16]$ & 0 & 3 & 0 &  \\ \cline{2-5}
 & $ {ODP}=[4,6,6,8,8,8,8,16] $ & 0 & 1 & 0 & 5 \\
\hline
$\begin{array}{l}A_{0,16}= 1, A_{4,12}= 10, \\ A_{6,10}= 72, A_{8}= 90\end{array}$ & $ \begin{array}{l}{ODP}^{dic}=[4,4,4,6,6,8,10,12] \\ {ODP}^{inv}=[4,4,4,6,6,8,8,16] \end{array}$ & 0 & 0 & 1 & \\ \cline{2-5}
 & $ \begin{array}{l}{ODP}^{dic}=[4,4,6,6,6,8,10,12] \\ {ODP}^{inv}=[4,4,6,6,6,8,8,16] \end{array}$  & 0 & 0 & 7 &  \\ \cline{2-5}
 & $ {ODP}=[4,4,6,6,8,8,8,16] $ & 0 & 6 & 2 &  \\ \cline{2-5}
 & $ \begin{array}{l}{ODP}^{dic}=[4,6,6,6,8,8,8,10] \\ {ODP}^{inv}=[4,4,6,6,8,8,8,16] \end{array}$  & 0 & 1 & 0 & 17 \\
\hline
$\begin{array}{l}A_{0,16}= 1, A_{4,12}= 12, \\ A_{6,10}= 64, A_{8}= 102\end{array}$ & $ \begin{array}{l}{ODP}^{dic}=[4,4,4,6,6,8,10,12] \\ {ODP}^{inv}=[4,4,4,6,6,8,8,16] \end{array}$ & 0 & 0 & 1 & \\ \cline{2-5}
 & $ {ODP}=[4,4,4,6,8,8,8,16] $ & 0 & 2 & 2 &  \\ \cline{2-5}
 & $ \begin{array}{l}{ODP}^{dic}=[4,4,6,6,6,8,10,12] \\ {ODP}^{inv}=[4,4,4,6,8,8,8,16] \end{array}$  & 0 & 2 & 1 &  \\ \cline{2-5}
 & $ \begin{array}{l}{ODP}^{dic}=[4,4,6,6,6,8,10,12] \\ {ODP}^{inv}=[4,4,6,6,6,8,8,16] \end{array}$  & 0 & 1 & 1 &  \\ \cline{2-5}
 & $ {ODP}=[4,4,6,6,8,8,8,16] $ & 0 & 14 & 25 &  \\ \cline{2-5}
 & $ {ODP}=[4,4,6,8,8,8,8,16] $ & 1 & 5 & 2 &  \\ \cline{2-5}
 & $ {ODP}=[4,6,6,8,8,8,8,16] $ & 0 & 1 & 0 & 58 \\
\hline

 \noalign{\hrule height1pt}
\end{tabular}
}
\end{center}
\end{table}


\begin{table}
 \caption{ODP for Near Extremal FSD Even $[16,8,4]$ codes (Part 2)}
 \label{tab:fsd16_8_4_2}
 \begin{center}
{\small
\begin{tabular}{|l|c|c|c|c|c|}
\noalign{\hrule height1pt}
Weight Distribution & ODP & SD & ID & FSD & Total  \\
       \hline
$\begin{array}{l}A_{0,16}= 1, A_{4,12}= 14, \\ A_{6,10}= 56, A_{8}= 114\end{array}$ & ${ODP}=[4, 4, 4, 6, 8, 8, 8, 16]$ & 0 & 0 & 4 & \\ \cline{2-5}
 & $ {ODP}=[4,4,4,8,8,8,8,16] $ & 0 & 2 & 2 &  \\ \cline{2-5}
 & $ \begin{array}{l}{ODP}^{dic}=[4,4,6,6,6,8,10,12] \\ {ODP}^{inv}=[4,4,4,6,8,8,8,16] \end{array}$  & 0 & 0 & 2 &  \\ \cline{2-5}
 & $ \begin{array}{l}{ODP}^{dic}=[4,4,6,6,8,8,8,10] \\ {ODP}^{inv}=[4,4,4,6,8,8,8,16] \end{array}$  & 0 & 0 & 8 &  \\ \cline{2-5}
 & $ \begin{array}{l}{ODP}^{dic}=[4,4,6,6,8,8,8,10] \\ {ODP}^{inv}=[4,4,4,8,8,8,8,16] \end{array}$  & 0 & 1 & 0 &  \\ \cline{2-5}
 & $ \begin{array}{l}{ODP}^{dic}=[4,4,6,6,8,8,8,12] \\ {ODP}^{inv}=[4,4,4,6,8,8,8,16] \end{array}$  & 0 & 1 & 1 &  \\ \cline{2-5}
 & $ \begin{array}{l}{ODP}^{dic}=[4,4,6,6,8,8,8,16] \\ {ODP}^{inv}=[4,4,4,8,8,8,8,16] \end{array}$  & 0 & 6 & 3 &  \\ \cline{2-5}
 & $ {ODP}=[4,4,6,6,8,8,8,16] $ & 0 & 2 & 4 & 36 \\

 \noalign{\hrule height1pt}
\end{tabular}
}
\end{center}
\end{table}


\begin{table}
 \caption{ODP for Near Extremal FSD Even $[16,8,4]$ codes (Part 3)}
 \label{tab:fsd16_8_4_3}
 \begin{center}
{\small
\begin{tabular}{|l|c|c|c|c|c|}
\noalign{\hrule height1pt}
Weight Distribution & ODP & SD & ID & FSD & Total  \\
       \hline

$\begin{array}{l}A_{0,16}= 1, A_{4,12}= 16, \\ A_{6,10}= 48, A_{8}= 126\end{array}$ & $ \begin{array}{l}{ODP}^{dic}=[4,4,4,6,6,8,10,12] \\ {ODP}^{inv}=[4,4,4,6,6,8,8,16] \end{array}$ & 0 & 1 & 0 & \\ \cline{2-5}
 & $ {ODP}=[4,4,4,6,8,8,8,16] $ & 0 & 2 & 2 &  \\ \cline{2-5}
 & $ {ODP}=[4,4,4,8,8,8,8,16] $ & 0 & 3 & 3 &  \\ \cline{2-5}
 & $ \begin{array}{l}{ODP}^{dic}=[4,4,6,6,8,8,8,12] \\ {ODP}^{inv}=[4,4,4,6,8,8,8,16] \end{array}$  & 0 & 0 & 1 &  \\ \cline{2-5}
 & $ \begin{array}{l}{ODP}^{dic}=[4,4,6,6,8,8,8,12] \\ {ODP}^{inv}=[4,4,4,8,8,8,8,16] \end{array}$  & 0 & 0 & 1 &  \\ \cline{2-5}
 & $ \begin{array}{l}{ODP}^{dic}=[4,4,6,6,8,8,8,16] \\ {ODP}^{inv}=[4,4,4,8,8,8,8,16] \end{array}$  & 0 & 2 & 1 &  \\ \cline{2-5}
 & $ {ODP}=[4,4,6,6,8,8,8,16] $ & 0 & 1 & 2 &  \\ \cline{2-5}
 & $ {ODP}=[4,4,6,8,8,8,8,16] $ & 0 & 2 & 0 & 21 \\
\hline
$\begin{array}{l}A_{0,16}= 1, A_{4,12}= 20, \\ A_{6,10}= 32, A_{8}= 150\end{array}$ &  $ {ODP}=[4,4,4,8,8,8,8,16] $ & 0 & 3 & 0 & \\ \cline{2-5}
 & $ {ODP}=[4,4,6,8,8,8,8,16] $ & 0 & 1 & 0 & 4 \\
\hline
$\begin{array}{l}A_{0,16}= 1, A_{4,12}= 28, \\ A_{8}= 198\end{array}$ & $ {ODP}=[4,4,4,8,8,8,8,16] $ & 2 & 0 & 0 & 2\\
\hline
{\bf Total} & & 3 & 65 & 76 & 144\\

 \noalign{\hrule height1pt}
\end{tabular}
}
\end{center}
\end{table}

}


\bigskip

\end{document}